\newtheorem{theorem}{\bf Theorem}
\newtheorem{lemma}[theorem]{\bf Lemma}
\newtheorem{proposition}[theorem]{\bf Proposition}
\newtheorem{corollary}[theorem]{\bf Corollary}
\newtheorem{remark}[theorem]{\bf Remark}
\newtheorem{definition}[theorem]{\bf Definition}
\begin{document}
\title{\bf On APN functions and their derivatives}

\author{Augustine Musukwa}
\date{}
\maketitle  

\begin{center}{Mzuzu University, P/Bag 201, Mzuzu 2, Malawi \\augustinemusukwa@gmail.com } \end{center}            
\begin{abstract}
We determine a connection between the weight of a Boolean function and the total weight of its first-order derivatives. The relationship established is used to study some cryptographic properties of Boolean functions. We establish a characterization of APN permutations in terms of the weight of the first-order derivatives of their components. We also characterize APN functions by the total weight of the second-order derivatives of their components. The total weight of the first-order and second-order derivatives for functions such as permutations, bent, partially-bent, quadratic, plateaued and balanced functions is determined.
\end{abstract}

\noindent {\bf Keywords:} Boolean functions; first-order derivatives; second-order derivatives; APN functions; APN permutations\\[.2cm]
\noindent {\bf MSC 2010:} 06E30, 94A60, 14G50

\section{Introduction}

The nonlinearity and differential uniformity of a vectorial Boolean function from $\mathbb{F}_2^n$ to itself are properties which are used to measure the resistance of a function towards linear and differential attacks, respectively. It is well-known that APN and AB functions provide optimal resistance against the said attacks. Many studies are conducted to learn more about the properties of these functions. There are several approaches to the study and characterization of these functions (for example see \cite{Beth,Car2, Car4, Car5}). Today, many APN functions are known as several studies have been conducted on their constructions.  However, not so much is known about APN permutations in even dimensions. It is, therefore, important to study more about the properties and characterizations of APN permutations.  One of the long-standing open problems is to find an APN permutation in dimension $8$. 


In this paper, we study about characterizations of APN functions (permutations). Can the total weight of the first-order and second-order derivatives of (vectorial) Boolean functions give us any information about the original functions? We first determine a relationship between the weight of a Boolean function and the total weight of its first-order derivatives. The relationship established plays an important role in obtaining most of the results in this paper. We establish characterizations of APN functions (permutations) in terms of the weight of the first-order and second-order derivatives of their components. The total weight of the first-order and second-order derivatives of the functions such as permutations, plateaued, bent, partially-bent, quadratic and balanced functions is determined. 

The paper is organised as follows. In Section \ref{preliminaries}, we report some definitions and known results. In Section~\ref{total-weight-first-derivatives}, we study the weight of the first-order derivatives of Boolean functions and the components of vectorial Boolean functions. We establish characterization of APN permutation in terms of the weight of the first-order derivatives their components. In Section \ref{total-weight-second-derivatives}, we use the total weight of second-order derivatives of Boolean functions and the components of vectorial Boolean functions to characterize bent and APN functions.

\section{Preliminaries}\label{preliminaries}
In this section, we report some definitions and well-known results. The reader is referred to \cite{Ber,Car1, Car2, Cus, Mac, Mus, Nyb1} for more details.

In this paper, the field of two elements, $0$ and $1$, is denoted by $\mathbb{F}_2$.  A vector in the vector space $\mathbb{F}_2^n$ is denoted by small letters such as $v$. We use ordinary addition $+$ instead of XOR $\oplus$. Given any set $A$, we denote its size by $|A|$.

A {\em Boolean function} is any function $f$ from $\mathbb{F}_2^n$ to $\mathbb{F}_2$ and a {\em vectorial Boolean function} is any function $F$ from $\mathbb{F}_2^n$ to $\mathbb{F}_2^m$ for some positive integers $n$ and $m$. In this paper we study vectorial Boolean functions from $\mathbb{F}_2^n$ to $\mathbb{F}_2^n$. A Boolean function in algebraic normal form, which is the $n$-variable polynomial representation over $\mathbb{F}_2$, is given by: \[f(x_1,...,x_n)=\sum_{I\subseteq \mathcal{P}}a_I\left(\prod_{i\in I}x_i\right),\] where $\mathcal{P}=\{1,...,n\}$ and $a_I\in \mathbb{F}_2$. The {\em algebraic degree} (or simply {\em degree}) of $f$, denoted by $\deg(f)$, is $\max_{I\subseteq \mathcal{P}}\{|I|\mid a_I\ne 0\}$.

A Boolean function $f$ is called {\em affine} if $\deg(f)\leq 1$ and {\em quadratic} if $\deg(f)=2$. We denote the set of all affine functions by $A_n$. The {\em weight} of a Boolean function $f$ is defined as $\mathrm{wt}(f)=|\{x\in \mathbb{F}^n\mid f(x)=1\}|$ and $f$ is {\em balanced} if $\mathrm{wt}(f)=2^{n-1}$. The {\em distance} between two Boolean functions $f$ and $g$ is defined as $d(f,g)=\mathrm{w}(f+g)$ and the {\em nonlinearity} of $f$  is defined as $\mathcal{N}(f)=\min_{\alpha \in A_n}d(f,\alpha)$.

A vectorial Boolean function $F$ from $\mathbb{F}_2^n$ to itself is denoted by $F=(f_1,...,f_n)$ where $f_1,...,f_n$ are Boolean functions called {\em coordinate functions}. The functions $\lambda\cdot F$, with $\lambda\in \mathbb{F}_2^n\setminus\{0\}$ and "$\cdot$" denoting dot product, are called {\em components} of $F$ and they are denoted by $F_\lambda$. A vectorial Boolean function $F$ is a {\em permutation} if and only if all its components are balanced. The degree of $F$ is given by $\deg(F)=\max_{\lambda\in \mathbb{F}_2^n\setminus\{0\}}\deg(F_\lambda)$.

The {\em Walsh transform} of a Boolean function $f$ is defined as the function $W_f$ from $\mathbb{F}^n$ to $\mathbb{Z}$ given by: \[W_f(a)=\sum_{x\in\mathbb{F}_2^n}(-1)^{f(x)+a\cdot x}\,,\] for all $a \in \mathbb{F}_2^n$. A set of $W_f(a)$ for all $a\in\mathbb{F}_2^n$ is called {\em Walsh spectrum}.  We define $\mathcal{F}(f)$ as \[\mathcal{F}(f)=W_f(0)=\sum_{x\in \mathbb{F}_2^n}(-1)^{f(x)}=2^n-2\mathrm{wt}(f).\] Notice that $f$ is balanced if and only if $\mathcal{F}(f)=0$.

The nonlinearity of a function $f$ can also be defined as follows: \begin{align}\label{nonlinearity} \mathcal{N}(f)=2^{n-1}-\frac{1}{2}\max\limits_{a\in \mathbb{F}_2^n}|W_f(a)|.\end{align} The nonlinearity of a vectorial Boolean function $F$ is defined as follows: \[\mathcal{N}(F)=\min_{\lambda\in\mathbb{F}_2^n\setminus\{0\}}\mathcal{N}(F_\lambda).\]  For every Boolean function $f$, with $n$ even, \(\mathcal{N}(f)\leq 2^{n-1}-2^{\frac{n}{2}-1}\) and $f$ is said to be {\em bent} if and only if equality holds.  The lowest possible value for $\max_{a\in \mathbb{F}_2^n}|W_f(a)|$ is $2^{\frac{n}{2}}$ and this bound is achieved only for bent functions. 

For $n$ odd, a Boolean function $f$ is called {\em semi-bent} if $\mathcal{N}(f)=2^{n-1}-2^{\frac{n-1}{2}}$. Semi-bent functions can also be defined in even dimension. For $n$ even, a function $f$ is semi-bent if $\mathcal{N}(f)=2^{n-1}-2^{\frac{n}{2}}$. A vectorial Boolean function $F$ in odd dimension is called {\em almost-bent (AB)} if all its components are semi-bent. 

A Boolean function $f$ on $n$ variables is called {\em plateaued} if its Walsh spectrum is either $\{\pm 2^{n/2} \}$ (this happens only when $n$ is even and in this case $f$ is bent) or $\{0,\pm \mu\}$ for some integer $\mu$ (see \cite{Zhe}). In \cite{Cus, Zhe}, the {\em order} of a plateaued function $f$ on $n$ variables is defined as the even integer $r$, $0 \leq r \leq n$, such that all non-zero values of $W_f^2(a)$ are $2^{2n-r}$.

We define the {\em first-order derivative} of $f$ at $a\in\mathbb{F}_2^n$ by $D_af(x)=f(x+a)+f(x)$ and the {\em second-order derivative} of $f$ at $a$ and $b$ is defined as: \[D_bD_af(x)=f(x)+f(x+a)+f(x+b)+f(x+a+b).\] An element $a\in\mathbb{F}_2^n$ is called a {\em linear structure} of $f$ if $D_af$ is constant. The set of all linear structures of $f$ is denoted by $V(f)$ and we call it the {\em linear space} of $f$. It is well-known that $V(f)$ is a vector space.

\begin{theorem}\label{bent-thm}
	A Boolean function $f$ on $n$ variables is bent if and only if $D_af$ is balanced for any nonzero $a\in\mathbb{F}^n$.
\end{theorem}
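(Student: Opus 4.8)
The plan is to translate both conditions into statements about the Walsh transform and link them through the Wiener--Khinchin identity. In the paper's notation, $D_af$ is balanced exactly when $\mathcal{F}(D_af)=\sum_{x\in\mathbb{F}_2^n}(-1)^{f(x)+f(x+a)}=0$. Writing $\Delta_f(a):=\mathcal{F}(D_af)$ for this autocorrelation sum, and noting $\Delta_f(0)=2^n$, the theorem becomes: $f$ is bent if and only if $\Delta_f(a)=0$ for every nonzero $a\in\mathbb{F}_2^n$. I would also record at the outset that bentness is equivalent to a flat squared spectrum $W_f^2(a)=2^n$ for all $a$; this follows because the stated bound gives $\max_a|W_f(a)|=2^{\frac{n}{2}}$ for bent $f$, while Parseval's relation $\sum_aW_f^2(a)=2^{2n}$ forces each of the $2^n$ values $W_f^2(a)$ to equal $2^n$.

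The key step is to express the squared Walsh transform through the autocorrelation. Expanding the product and substituting $y=x+b$ gives
\[W_f^2(a)=\sum_{x,y\in\mathbb{F}_2^n}(-1)^{f(x)+f(y)+a\cdot(x+y)}=\sum_{b\in\mathbb{F}_2^n}(-1)^{a\cdot b}\,\Delta_f(b).\]
This identity is the heart of the argument. Everything that follows is Fourier inversion over $\mathbb{F}_2^n$, using the orthogonality relation $\sum_{a\in\mathbb{F}_2^n}(-1)^{a\cdot c}=2^n$ when $c=0$ and $0$ otherwise.

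For the reverse implication, suppose $\Delta_f(b)=0$ for all nonzero $b$. Then the identity collapses to $W_f^2(a)=\Delta_f(0)=2^n$ for every $a$, so $f$ is bent by the flat-spectrum criterion. For the forward implication, suppose $f$ is bent, so $W_f^2(a)=2^n$ for all $a$. Multiplying the identity by $(-1)^{a\cdot c}$, summing over $a$, and applying orthogonality yields $\sum_a(-1)^{a\cdot c}W_f^2(a)=2^n\Delta_f(c)$; substituting $W_f^2(a)=2^n$ gives $2^n\Delta_f(c)=2^n\sum_a(-1)^{a\cdot c}$, hence $\Delta_f(c)=0$ for every nonzero $c$, i.e. $D_cf$ is balanced. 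I expect the main obstacle to be purely bookkeeping: establishing the Wiener--Khinchin identity cleanly and keeping the inversion constants correct, since once the identity is in hand both directions are immediate.
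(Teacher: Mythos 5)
Your proof is correct. Note, however, that the paper does not prove this statement at all: Theorem \ref{bent-thm} appears in the preliminaries as a quoted classical result (it is the standard Rothaus/Dillon characterization of bent functions, available in the cited references such as \cite{Car1}), so there is no in-paper argument to compare against. Your route is the standard one and all steps check out: the identity $W_f^2(a)=\sum_{b\in\mathbb{F}_2^n}(-1)^{a\cdot b}\,\mathcal{F}(D_bf)$ follows from the substitution $y=x+b$ exactly as you state (it is the case $a$ general of the paper's relation (\ref{square-fourier}), which is its evaluation at $a=0$); your derivation of the flat-spectrum criterion from Parseval is sound, since $2^n$ values $W_f^2(a)\leq 2^n$ summing to $2^{2n}$ must all equal $2^n$; and the Fourier inversion in the forward direction, giving $2^n\mathcal{F}(D_cf)=\sum_a(-1)^{a\cdot c}W_f^2(a)=2^n\sum_a(-1)^{a\cdot c}=0$ for $c\neq 0$, is clean. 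One small point worth making explicit: flatness $W_f^2(a)=2^n$ with $W_f(a)\in\mathbb{Z}$ forces $n$ even, so your argument also shows that the hypothesis ``$D_af$ balanced for all nonzero $a$'' is vacuous in odd dimension, consistent with the nonexistence of bent functions there; the paper's statement leaves this implicit.
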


The two Boolean functions $f,g:\mathbb{F}_2^n\rightarrow \mathbb{F}_2$ are said to be {\em affine equivalent} if there exist an affinity $\varphi:\mathbb{F}_2^n\rightarrow \mathbb{F}_2^n$ such that $f=g\circ \varphi$. This relation is denoted by $\sim_A $ and written as $f\sim_A g$.

\begin{theorem}\label{quadratic}
	Let $f$ be a quadratic Boolean function on $n$ variables. Then
	
	\begin{itemize}
		\item[(i)] $f\sim_A x_1x_2+\cdots +x_{2k-1}x_{2k}+x_{2k+1}$, with $k\leq \lfloor \frac{n-1}{2}\rfloor$, if $f$ is balanced,
		\item[(ii)] $f\sim_A x_1x_2+\cdots +x_{2k-1}x_{2k}+c$,  with $k\leq \lfloor \frac{n}{2}\rfloor$ and $c\in\mathbb{F}_2$, if $f$ is unbalanced.
	\end{itemize}
\end{theorem}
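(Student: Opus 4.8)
The plan is to reduce the problem to the classification of alternating bilinear forms over $\mathbb{F}_2$ and then to track the affine part of $f$ separately. To $f$ I associate its polarization $B(x,y)=f(x+y)+f(x)+f(y)+f(0)$. Since $\deg(f)=2$, expanding each monomial $x_ix_j$ shows that $B$ is bilinear and alternating, i.e.\ $B(x,x)=0$. Moreover $B(a,y)=D_af(y)+f(a)+f(0)$, so $B(a,\cdot)\equiv 0$ if and only if $D_af$ is constant; hence the radical of $B$ is exactly the linear space $V(f)$. This identification is what will let balancedness be read off from the reduced form.

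First I would bring $B$ into symplectic normal form. By the structure theory of alternating forms over a field there is a basis $u_1,v_1,\dots,u_k,v_k,w_1,\dots,w_m$ of $\mathbb{F}_2^n$ with $2k+m=n$, in which $B(u_i,v_i)=1$ and all other pairings of basis vectors vanish, where $2k=\mathrm{rank}(B)$ and $w_1,\dots,w_m$ span $V(f)$. This is a linear change of variables, hence an affinity, so after it we may assume the polarization of $f$ equals that of $\sum_{i=1}^k u_iv_i$. Over $\mathbb{F}_2$ a function of polarization $0$ is affine, so two quadratic functions sharing a polarization differ by an affine function; therefore in the new coordinates
\[
f=\sum_{i=1}^k u_iv_i+\sum_{i=1}^k(\alpha_iu_i+\beta_iv_i)+\ell'(w_1,\dots,w_m)+c,
\]
for scalars $\alpha_i,\beta_i,c\in\mathbb{F}_2$ and a linear form $\ell'$ in the radical variables.

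Next I would clear the unwanted linear terms by translations, which are again affinities. Using $u_iv_i+\alpha_iu_i+\beta_iv_i=(u_i+\beta_i)(v_i+\alpha_i)+\alpha_i\beta_i$, the substitution $u_i\mapsto u_i+\beta_i$, $v_i\mapsto v_i+\alpha_i$ removes the linear terms attached to the hyperbolic pairs at the cost of a constant, leaving $f\sim_A\sum_{i=1}^k u_iv_i+\ell'(w_1,\dots,w_m)+c'$. Everything now hinges on whether $\ell'=0$. If $\ell'\neq 0$, a linear change among $w_1,\dots,w_m$ normalizes it to a single coordinate $w_1$, and a final translation $w_1\mapsto w_1+c'$ absorbs the constant; relabelling the variables yields case (i). Since $w_1$ then appears in no quadratic term, $f$ is balanced, and using $2k+1$ coordinates forces $k\le\lfloor(n-1)/2\rfloor$. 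If $\ell'=0$ we already have $\sum_{i=1}^k u_iv_i+c'$, which is case (ii); here $\sum_{i=1}^k u_iv_i$ has weight $2^{2k-1}-2^{k-1}$ on $\mathbb{F}_2^{2k}$, so $\mathrm{wt}(f)=2^{n-1}\mp 2^{n-k-1}\neq 2^{n-1}$ and $f$ is unbalanced, while $2k\le n$ gives $k\le\lfloor n/2\rfloor$.

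Because affine equivalence preserves weight and the two cases are exhaustive, producing balanced and unbalanced functions respectively, the dichotomy matches the hypothesis: $f$ is balanced exactly when $\ell'\neq 0$, giving (i), and unbalanced exactly when $\ell'=0$, giving (ii). The main obstacle I anticipate is bookkeeping rather than a deep idea: one must check that the symplectic reduction and the two rounds of translation are all realized by genuine affinities $\varphi$ with $f=g\circ\varphi$ (so that only input transformations, and no additions to the output, are used), and one must verify carefully that the residual linear form on the radical is the single feature that cannot be removed and is precisely the obstruction to balancedness.
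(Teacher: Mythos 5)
Your proof is correct, but note that the paper itself offers no proof of this statement: Theorem \ref{quadratic} is quoted in the preliminaries as a classical fact (it is Dickson's classification of quadratic Boolean functions, found in the cited references such as MacWilliams--Sloane and Carlet's book), so there is no internal argument to compare against. Your symplectic-reduction route is the standard proof of that classical result, and all the key steps check out: the polarization $B(x,y)=f(x+y)+f(x)+f(y)+f(0)$ is indeed alternating bilinear with radical $V(f)$; the identity $u_iv_i+\alpha_iu_i+\beta_iv_i=(u_i+\beta_i)(v_i+\alpha_i)+\alpha_i\beta_i$ correctly clears the linear terms on the hyperbolic pairs by a translation; and the final dichotomy (residual linear form $\ell'$ on the radical nonzero versus zero) exactly matches balanced versus unbalanced, since in case (i) the variable $x_{2k+1}$ splits off and forces balancedness, while in case (ii) the weight $2^{n-1}\mp 2^{n-k-1}$ is never $2^{n-1}$. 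Two small points worth making explicit: since $\deg(f)=2$ the form $B$ is nonzero, so $k\geq 1$ and the normal forms are genuinely quadratic; and the observation that exhaustiveness of the two cases plus affine invariance of weight is what converts the constructive dichotomy into the stated ``if balanced/if unbalanced'' formulation --- you state this, and it is the one logical step a careless write-up might omit. Your identification of the radical with $V(f)$ also silently recovers the parameter $k=\dim V(f)$ used later in Theorem \ref{quadratic-weight}, which is a nice bonus consistent with how the paper uses these normal forms.
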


\begin{definition}
	For a vectorial Boolean function $F$ and $a,b\in \mathbb{F}_2^n$, define \[\delta_F(a,b)=|\{x\in \mathbb{F}_2^n\mid D_aF(x)=b\}|.\] The {\em differential uniformity of $F$ } is defined as: \[\delta(F)=\max_{a\ne 0,b\in \mathbb{F}_2^n}\delta_F(a,b)\] and always satisfies $\delta(F)\geq 2$. A function with $\delta(F)=2$ is called {\em Almost Perfect Nonlinear (APN)}.
\end{definition}

There is another representation of vectorial Boolean functions which is known as {\em univariate polynomial representation}. The finite field $\mathbb{F}_{2^n}$ has $2^n$ elements and we write $\mathbb{F}_{2^n}^\ast$ to denote a set of all nonzero elements of $\mathbb{F}_{2^n}$. The vector space $\mathbb{F}_2^n$ can be endowed with the structure of the finite field $\mathbb{F}_{2^n}$ (see \cite{Car1}). Any function $F$ from $\mathbb{F}_{2^n}$ into $\mathbb{F}_{2^n}$ admits a unique univariate polynomial representation over $\mathbb{F}_{2^n}$ given as: 
\begin{align}
F(x)&=\sum_{i=0}^{2^n-1}\delta_ix^i,
\end{align} 
where $\delta_i\in\mathbb{F}_{2^n}$ and the degree of $F$ is at most $2^n-1$. Given the binary expansion $i=\sum_{s=0}^{n-1}i_s2^s$, we define ${\rm w}_2(i)=\sum_{s=0}^{n-1}i_s$. Therefore, we say that $F$ is a vectorial Boolean function whose algebraic degree is given by \(\max\{{\rm w}_2(i)\mid 0\leq i\leq 2^n-1, \delta_i\neq 0\}\) (see \cite{Car1}).  

The {\em (absolute) trace function} $Tr:\mathbb{F}_{2^n}\rightarrow\mathbb{F}_2$ is defined as: 
\begin{align*}
Tr(z)=\sum_{i=0}^{n-1}z^{2^i},
\end{align*} where $z\in\mathbb{F}_{2^n}$. For $\nu\in\mathbb{F}_{2^n}$, a component $F_\nu$ of $F$ is given by $F_\nu(x)=Tr(\nu F)$.


\begin{theorem}\label{quadratic-weight}
	Let $f$ be any unbalanced quadratic Boolean function on $n$ variables. Then \[{\rm wt}(f)=2^{n-1}\pm 2^{\frac{n+k}{2}-1},\] where $k=V(f)$.
\end{theorem}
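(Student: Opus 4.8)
The plan is to reduce to the canonical form supplied by Theorem~\ref{quadratic}(ii) and compute everything for a single representative, exploiting that both sides of the claimed identity are invariants of the affine-equivalence class. First I would record the two invariance facts I need (reading the ``$k=V(f)$'' of the statement as $k=\dim V(f)$). If $f=g\circ\varphi$ with $\varphi(x)=Ax+b$ an affinity, then $\varphi$ is a bijection of $\mathbb{F}_2^n$, so $\mathrm{wt}(f)=\mathrm{wt}(g)$. Moreover $D_af(x)=g(Ax+b+Aa)+g(Ax+b)=D_{Aa}g(Ax+b)$, so $a\in V(f)$ if and only if $Aa\in V(g)$; hence $V(f)=A^{-1}V(g)$ and $\dim V(f)=\dim V(g)$. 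Thus it suffices to verify the weight formula for one function in the class, with $k=\dim V$ read off from that representative.

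Since $f$ is unbalanced, Theorem~\ref{quadratic}(ii) lets me take the representative $g=x_1x_2+\cdots+x_{2r-1}x_{2r}+c$ for some $r\leq\lfloor n/2\rfloor$ and $c\in\mathbb{F}_2$. The core computation is the weight of $h=x_1x_2+\cdots+x_{2r-1}x_{2r}$. This $h$ depends only on the first $2r$ variables, and its restriction to $\mathbb{F}_2^{2r}$ is the standard bent function, whose weight I would compute to be $2^{2r-1}-2^{r-1}$ (by an easy induction on $r$, or directly from the bent weight count). Extending $h$ by the $n-2r$ unused variables multiplies the weight by $2^{\,n-2r}$, giving $\mathrm{wt}(h)=2^{n-1}-2^{\,n-r-1}$. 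Adding the constant $c$ leaves this unchanged when $c=0$ and complements it to $2^{n-1}+2^{\,n-r-1}$ when $c=1$, which is precisely the source of the $\pm$ in the statement.

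It remains to identify $k=\dim V(g)$ and to convert the exponent. Because $g$ is the bent core $h$ padded by $n-2r$ dummy coordinates, every vector in the span of $e_{2r+1},\dots,e_n$ is a linear structure, and since the bent core itself has trivial linear space there are no further ones; hence $\dim V(g)=n-2r$, i.e. $k=n-2r$. Substituting $r=(n-k)/2$ into the exponent gives $n-r-1=(n+k)/2-1$, so $2^{\,n-r-1}=2^{(n+k)/2-1}$ and therefore $\mathrm{wt}(f)=2^{n-1}\pm 2^{(n+k)/2-1}$, as required.

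The genuinely non-routine points are the weight of the standard bent form and the determination that the linear space of the padded function is exactly the span of the dummy directions; the rest is index bookkeeping, of which the conversion $n-r-1=(n+k)/2-1$ is the step that makes the answer land in the stated shape. I expect no serious obstacle, only care in this arithmetic and in the interpretation of $k$. An alternative route would bypass the canonical form by invoking that a quadratic function is partially bent, so that all nonzero Walsh coefficients have magnitude $2^{(n+k)/2}$ with $k=\dim V(f)$; for unbalanced $f$ this forces $\mathcal{F}(f)=W_f(0)=\mp 2^{(n+k)/2}$, and then $\mathrm{wt}(f)=\tfrac{1}{2}\bigl(2^n-\mathcal{F}(f)\bigr)$ yields the formula at once. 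I would prefer the canonical-form argument above, since it relies only on the already-stated Theorem~\ref{quadratic} rather than on the partially-bent Walsh-magnitude fact.
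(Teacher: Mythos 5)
Your proof is correct. One thing to be aware of: the paper contains no proof of Theorem~\ref{quadratic-weight} at all --- it is stated in Section~\ref{preliminaries} as a known preliminary result (with the supporting literature cited at the start of that section), so there is no internal argument to compare yours against; what you have written is a self-contained proof that the paper itself omits. Your canonical-form reduction is the standard derivation and is complete: the affine invariance of both $\mathrm{wt}$ and $\dim V$ (via $D_af(x)=D_{Aa}g(Ax+b)$), the weight $2^{2r-1}-2^{r-1}$ of the bent core, the padding computation $\mathrm{wt}(h)=2^{n-1}-2^{n-r-1}$, and the identification $V(g)=\{0\}\times\mathbb{F}_2^{n-2r}$ (using that a bent function has no nonzero linear structures, since all its nonzero derivatives are balanced and hence nonconstant) are all verified correctly, and the substitution $k=n-2r$ lands exactly on the stated exponent; your reading of the paper's abbreviated ``$k=V(f)$'' as $k=\dim V(f)$ is clearly the intended one, and the $c=0,1$ dichotomy correctly accounts for the $\pm$. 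The only cosmetic point is that since $f$ has degree exactly $2$, the parameter $r$ in Theorem~\ref{quadratic}(ii) satisfies $r\geq 1$, which you implicitly use but could state. Your alternative route via the partially-bent Walsh magnitude $|W_f(a)|\in\{0,2^{(n+k)/2}\}$ is equally valid (and is closer to how the cited sources such as \cite{Car2,Car3} treat quadratic and partially-bent weights); your stated preference for the canonical-form argument is reasonable given that Theorem~\ref{quadratic} is the tool the paper actually provides.
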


\begin{definition}\label{partially-bent-defn}
	A Boolean function $f$ is partially-bent if there exists a linear subspace $E$ of $\mathbb{F}_2^n$ such that the restriction of $f$ to $E$ is affine and the restriction of $f$ to any complementary subspace $E'$ of $E$ (where $E\oplus E'=\mathbb{F}_2^n$) is bent.
\end{definition}

\begin{remark}\label{partially-bent-remark}
	In \cite{Car3}, it is proved that the weight for unbalanced partially-bent function is given by $2^{n-1}\pm 2^{n-1-h}$ where $\dim E=n-2h$ ($E$ as defined in Definition \ref{partially-bent-defn}) and $h\leq n/2$.
\end{remark}

\begin{theorem}\label{APN-fourier-first-derivative}
	Let $F$ be a function from $\mathbb{F}_{2^n}$ into itself. Then \[\sum_{\lambda\neq 0,a\in\mathbb{F}_{2^n}}\mathcal{F}^2(D_aF_\lambda) \geq 2^{2n+1}(2^n-1).\] Moreover, equality holds if and only if $F$ is APN.
\end{theorem}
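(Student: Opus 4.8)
The plan is to turn the left-hand side into a sum of squared differential values $\delta_F(a,b)^2$ by a standard character-sum computation, and then bound that quantity by an elementary inequality whose equality case is exactly the APN condition.

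First I would note that, since $F_\lambda=\lambda\cdot F$, the derivative satisfies $D_aF_\lambda(x)=\lambda\cdot D_aF(x)$, where $D_aF(x)=F(x+a)+F(x)$. Hence $\mathcal{F}(D_aF_\lambda)=\sum_{x\in\mathbb{F}_{2^n}}(-1)^{\lambda\cdot D_aF(x)}$, and squaring yields
\[\mathcal{F}^2(D_aF_\lambda)=\sum_{x,y\in\mathbb{F}_{2^n}}(-1)^{\lambda\cdot(D_aF(x)+D_aF(y))}.\]
Summing over $\lambda\neq 0$ and using the orthogonality relation that $\sum_{\lambda\in\mathbb{F}_{2^n}}(-1)^{\lambda\cdot w}$ equals $2^n$ when $w=0$ and $0$ otherwise, I obtain, for \emph{every} fixed $a$ (including $a=0$),
\[\sum_{\lambda\neq 0}\mathcal{F}^2(D_aF_\lambda)=2^nN_a-2^{2n},\qquad N_a:=|\{(x,y)\mid D_aF(x)=D_aF(y)\}|.\]

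Next I would sum this over all $a\in\mathbb{F}_{2^n}$. Since there are $2^n$ values of $a$, the $-2^{2n}$ terms contribute $-2^{3n}$; and for $a=0$ one has $D_0F\equiv 0$, so $N_0=2^{2n}$ and its summand $2^nN_0=2^{3n}$ cancels this exactly. The whole left-hand side therefore collapses to
\[\sum_{\lambda\neq 0,\,a\in\mathbb{F}_{2^n}}\mathcal{F}^2(D_aF_\lambda)=2^n\sum_{a\neq 0}N_a.\]
Grouping the pairs $(x,y)$ by their common value $b=D_aF(x)=D_aF(y)$ then gives $N_a=\sum_{b\in\mathbb{F}_{2^n}}\delta_F(a,b)^2$.

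Finally I would use two facts about the differential values for $a\neq 0$: each $\delta_F(a,b)$ is a non-negative even integer (if $x$ solves $D_aF(x)=b$ then so does $x+a$), and $\sum_b\delta_F(a,b)=2^n$. The elementary inequality $\delta^2\geq 2\delta$, valid for every even integer $\delta\geq 0$ with equality exactly when $\delta\in\{0,2\}$, gives $\sum_b\delta_F(a,b)^2\geq 2\sum_b\delta_F(a,b)=2^{n+1}$ for each $a\neq 0$. Summing over the $2^n-1$ nonzero $a$ yields $\sum_{a\neq 0}N_a\geq(2^n-1)2^{n+1}$, hence the claimed bound $2^{2n+1}(2^n-1)$, with equality if and only if $\delta_F(a,b)\in\{0,2\}$ for all $a\neq 0$ and all $b$, i.e.\ exactly when $\delta(F)=2$. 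The only delicate point is the bookkeeping of the $a=0$ term: one must check that it cancels precisely so as to produce the stated constant $2^{2n+1}(2^n-1)$ rather than $2^{2n}(2^n-1)$; everything else is the routine orthogonality expansion together with the one-line convexity-type inequality on even integers.
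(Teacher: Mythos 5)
Your proof is correct, and there is in fact nothing in the paper to compare it against: Theorem \ref{APN-fourier-first-derivative} is stated in the preliminaries as a known result imported from the literature (it is the classical Chabaud--Vaudenay-type bound, cf.\ \cite{Car2}), and the paper gives no proof of it. Your argument is the standard one and is complete: orthogonality of characters turns $\sum_{\lambda\neq 0}\mathcal{F}^2(D_aF_\lambda)$ into $2^nN_a-2^{2n}$ with $N_a=\sum_{b}\delta_F(a,b)^2$; the $a=0$ summand $2^nN_0=2^{3n}$ cancels the accumulated $-2^{3n}$ exactly as you claim, leaving $2^n\sum_{a\neq 0}N_a$; and since for $a\neq 0$ each $\delta_F(a,b)$ is a nonnegative even integer with $\sum_b\delta_F(a,b)=2^n$, the inequality $\delta^2\geq 2\delta$ yields $N_a\geq 2^{n+1}$, hence the stated constant $2^{2n+1}(2^n-1)$. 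Your equality analysis is also right, including the implicit last step: equality forces $\delta_F(a,b)\in\{0,2\}$ for all $a\neq 0$, and since the row sums $2^n$ are positive and the entries even, some entry equals $2$, so $\delta(F)=2$ exactly, i.e.\ $F$ is APN.
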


\begin{theorem}\label{APN-fourier-first_derivative 2}
	Let $F$ be a function from $\mathbb{F}_{2^n}$ into $\mathbb{F}_{2^n}$. Then, for any nonzero $a\in \mathbb{F}_{2^n}$, we have \[\sum_{\lambda\in\mathbb{F}_{2^n}}\mathcal{F}^2(D_aF_\lambda)\geq 2^{2n+1}.\] Moreover, $F$ is APN if and only if equality holds for all nonzero $a\in\mathbb{F}_{2^n}$.
\end{theorem}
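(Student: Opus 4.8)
The plan is to reduce the statement to a counting problem about the differential values $\delta_F(a,b)$ and then to exploit a parity constraint to get the sharp constant. First I would observe that taking derivatives commutes with taking components: since $F_\lambda=\lambda\cdot F$ (equivalently $F_\nu(x)=Tr(\nu F(x))$), linearity of the dot product gives $D_aF_\lambda(x)=\lambda\cdot\big(F(x+a)+F(x)\big)=\lambda\cdot D_aF(x)$. Hence $\mathcal{F}(D_aF_\lambda)=\sum_{x\in\mathbb{F}_{2^n}}(-1)^{\lambda\cdot D_aF(x)}$, and squaring yields
\[
\mathcal{F}^2(D_aF_\lambda)=\sum_{x,y\in\mathbb{F}_{2^n}}(-1)^{\lambda\cdot\left(D_aF(x)+D_aF(y)\right)}.
\]

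Next I would sum over all $\lambda\in\mathbb{F}_{2^n}$ (the term $\lambda=0$ is included, and causes no trouble) and apply the orthogonality relation, namely $\sum_{\lambda\in\mathbb{F}_{2^n}}(-1)^{\lambda\cdot v}$ equals $2^n$ when $v=0$ and $0$ otherwise. Only pairs with $D_aF(x)=D_aF(y)$ survive, so
\[
\sum_{\lambda\in\mathbb{F}_{2^n}}\mathcal{F}^2(D_aF_\lambda)=2^n\,\big|\{(x,y)\mid D_aF(x)=D_aF(y)\}\big|=2^n\sum_{b\in\mathbb{F}_{2^n}}\delta_F(a,b)^2.
\]
The crux is then a sharp lower bound on $\sum_b\delta_F(a,b)^2$. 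I would use two facts: that $\sum_b\delta_F(a,b)=2^n$, and that every $\delta_F(a,b)$ is \emph{even}. The latter holds because for $a\neq 0$ the derivative is constant on each pair $\{x,x+a\}$, since $D_aF(x+a)=F(x+2a)+F(x+a)=F(x)+F(x+a)=D_aF(x)$; thus the solutions of $D_aF(x)=b$ come in disjoint pairs. Consequently each $\delta_F(a,b)$ is either $0$ or at least $2$, which gives the termwise bound $\delta_F(a,b)^2\geq 2\,\delta_F(a,b)$. Summing, $\sum_b\delta_F(a,b)^2\geq 2\sum_b\delta_F(a,b)=2^{n+1}$, and therefore $\sum_{\lambda}\mathcal{F}^2(D_aF_\lambda)\geq 2^n\cdot 2^{n+1}=2^{2n+1}$.

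Finally I would read off the equality case. Equality in the termwise inequality $\delta^2\geq 2\delta$ holds exactly when $\delta_F(a,b)\in\{0,2\}$, so equality for a fixed nonzero $a$ is equivalent to $\delta_F(a,b)\leq 2$ for all $b$. Demanding this for every nonzero $a$ is precisely the condition $\delta(F)=2$, i.e.\ that $F$ is APN. The step I expect to be the main obstacle—or rather the essential idea that makes the bound tight—is the parity argument: a naive Cauchy--Schwarz estimate $\sum_b\delta_F(a,b)^2\geq\big(\sum_b\delta_F(a,b)\big)^2/2^n=2^n$ only yields $2^{2n}$, which is off by a factor of two, so forcing the nonzero differential values to be at least $2$ is exactly what recovers the correct constant $2^{2n+1}$.
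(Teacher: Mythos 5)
Your proof is correct, and since Theorem \ref{APN-fourier-first_derivative 2} is stated in Section \ref{preliminaries} as a reported well-known result, the paper contains no proof of its own to compare against; your argument is precisely the standard one from the cited literature (e.g.\ \cite{Ber,Car1}). In particular, the reduction via $D_aF_\lambda=\lambda\cdot D_aF$ and character orthogonality to $2^n\sum_{b}\delta_F(a,b)^2$, the combination of $\sum_{b}\delta_F(a,b)=2^n$ with the evenness of each $\delta_F(a,b)$ (solutions of $D_aF(x)=b$ pair off as $\{x,x+a\}$) yielding the termwise bound $\delta_F(a,b)^2\geq 2\,\delta_F(a,b)$, and the identification of the equality case with $\delta(F)=2$ are all handled correctly, including the observation that the $\lambda=0$ term causes no harm.
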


\section{On weight of first-order derivatives of APN permutations }\label{total-weight-first-derivatives}
In this section, we consider the total weight of first-order derivatives of Boolean functions and components of vectorial Boolean functions. A characterization of APN permutations in terms of the weight of first-order derivatives of their components is established.

For a given Boolean function $f$, we know from \cite{Car2} that 
\begin{align}\label{square-fourier}
	\mathcal{F}^2(f)=\sum_{a\in\mathbb{F}_2^n}\mathcal{F}(D_af).
\end{align}
We use the relation (\ref{square-fourier}) to show the connection between the weight of a Boolean function $f$ and the total weight of its first-order derivatives.

\begin{proposition}\label{weight-function-weight-derivatives}
	For a Boolean function $f$ on $n$ variables, we have \[{\rm wt}(f)=2^{n-1}\pm \frac{1}{2}\sqrt{2^{2n}-2\sum_{a\in \mathbb{F}_2^n\setminus \{0\}}{\rm wt}(D_af)}.\]
\end{proposition}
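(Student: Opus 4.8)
The plan is to use the identity (\ref{square-fourier}) together with the definition of $\mathcal{F}$ to convert everything into statements about weights. First I would start from the given relation
\[
\mathcal{F}^2(f)=\sum_{a\in\mathbb{F}_2^n}\mathcal{F}(D_af),
\]
and isolate the $a=0$ term. Since $D_0f(x)=f(x)+f(x)=0$ identically, the zero derivative is the constant function $0$, so $\mathcal{F}(D_0f)=\sum_{x}(-1)^0=2^n$. Hence the sum splits as
\[
\mathcal{F}^2(f)=2^n+\sum_{a\in\mathbb{F}_2^n\setminus\{0\}}\mathcal{F}(D_af).
\]

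Next I would rewrite each term using the excerpt's identity $\mathcal{F}(g)=2^n-2\,{\rm wt}(g)$, applied to $g=D_af$. This gives $\mathcal{F}(D_af)=2^n-2\,{\rm wt}(D_af)$, and summing over the $2^n-1$ nonzero values of $a$ produces
\[
\sum_{a\in\mathbb{F}_2^n\setminus\{0\}}\mathcal{F}(D_af)=(2^n-1)2^n-2\sum_{a\in\mathbb{F}_2^n\setminus\{0\}}{\rm wt}(D_af).
\]
Substituting back and collecting the $2^n$ terms yields $\mathcal{F}^2(f)=2^{2n}-2\sum_{a\neq 0}{\rm wt}(D_af)$, which expresses the squared Fourier coefficient purely in terms of the total derivative weight.

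Finally I would invoke the other identity for $\mathcal{F}(f)$, namely $\mathcal{F}(f)=2^n-2\,{\rm wt}(f)$, so that $\mathcal{F}^2(f)=(2^n-2\,{\rm wt}(f))^2$. Equating the two expressions for $\mathcal{F}^2(f)$ gives
\[
(2^n-2\,{\rm wt}(f))^2=2^{2n}-2\sum_{a\in\mathbb{F}_2^n\setminus\{0\}}{\rm wt}(D_af),
\]
and taking square roots (keeping both signs, since $2^n-2\,{\rm wt}(f)$ may be positive or negative) and solving for ${\rm wt}(f)$ delivers the claimed formula. The computation is essentially routine bookkeeping; the only point requiring care is the $\pm$ sign, which arises precisely because the square root does not determine the sign of $\mathcal{F}(f)=2^n-2\,{\rm wt}(f)$, i.e.\ it does not record whether ${\rm wt}(f)$ lies above or below $2^{n-1}$. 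I expect the main (though minor) obstacle to be handling the $a=0$ term correctly and being explicit about why both signs must be retained in the final statement.
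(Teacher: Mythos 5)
Your proposal is correct and follows essentially the same route as the paper: both start from the identity $\mathcal{F}^2(f)=\sum_{a\in\mathbb{F}_2^n}\mathcal{F}(D_af)$, convert to weights via $\mathcal{F}(g)=2^n-2\,{\rm wt}(g)$ (with the $a=0$ term contributing $2^n$), and extract ${\rm wt}(f)$ by taking square roots with both signs. Your explicit isolation of the $a=0$ term and your sign discussion correspond exactly to the paper's case split on ${\rm wt}(f)\leq 2^{n-1}$ versus ${\rm wt}(f)>2^{n-1}$.
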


\begin{proof}
	From the relation $\mathcal{F}^2(f)=\sum_{a\in\mathbb{F}_2^n}\mathcal{F}(D_af)$, we obtain the following: 
	
	\begin{align}\label{fourier}
	|\mathcal{F}(f)|=\sqrt{\sum_{a\in\mathbb{F}_2^n}\mathcal{F}(D_af)}.
	\end{align}
	
	Since $\mathcal{F}(f)=2^n-2{\rm wt}(f)$, then (\ref{fourier}) becomes 
	\begin{align}\label{weight-derivative}
		|2^n-2{\rm wt}(f)|= \sqrt{\sum_{a\in\mathbb{F}_2^n}\left(2^n-2{\rm wt}(D_af)\right)}.
	\end{align}   
		
	If ${\rm wt}(f)\leq 2^{n-1}$, then $|2^n-2{\rm wt}(f)|=2^n-2{\rm wt}(f)$. So (\ref{weight-derivative}) becomes
	
	\begin{align*}
	 {\rm wt}(f)&= 2^{n-1}- \frac{1}{2}\sqrt{\sum_{a\in\mathbb{F}_2^n}(2^n-2{\rm wt}(D_af))}\\&= 2^{n-1}- \frac{1}{2}\sqrt{2^{2n}-2\sum_{a\in\mathbb{F}_2^n\setminus\{0\}}{\rm wt}(D_af)}.
	\end{align*}
	
	If ${\rm wt}(f)>2^{n-1}$, then $|2^n-2{\rm wt}(f)|=2{\rm w}(f)-2^n$. So (\ref{weight-derivative}) becomes 
						\begin{align*}
							{\rm wt}(f)&=2^{n-1}+ \frac{1}{2}\sqrt{2^{2n}-2\sum_{a\in\mathbb{F}_2^n\setminus\{0\}}{\rm wt}(D_af)}.	\qedhere
						\end{align*}
\end{proof}

By using Proposition \ref{weight-function-weight-derivatives}, we can write the total weight of first-order derivatives of a Boolean function $f$ in terms of weight of $f$ as in the following.
\begin{corollary}\label{sum-weight-derivatives}
	Let $f$ be a Boolean function on $n$ variables. Then we have \[\sum_{a\in\mathbb{F}_2^n\setminus\{0\}}{\rm wt}(D_af)=2{\rm wt}(f)[2^n-{\rm wt}(f)].\]
\end{corollary}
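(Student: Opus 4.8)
The plan is to obtain Corollary \ref{sum-weight-derivatives} as a direct algebraic inversion of Proposition \ref{weight-function-weight-derivatives}. That proposition already expresses ${\rm wt}(f)$ in terms of the quantity $S:=\sum_{a\in\mathbb{F}_2^n\setminus\{0\}}{\rm wt}(D_af)$, so all that remains is to solve for $S$. First I would rewrite the proposition in the form ${\rm wt}(f)-2^{n-1}=\pm\tfrac12\sqrt{2^{2n}-2S}$, moving the $2^{n-1}$ onto the left-hand side so that the sign ambiguity is isolated on the right.

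Next I would square both sides. This step is the whole point: squaring removes the $\pm$ ambiguity, so there is no need to case-split on whether ${\rm wt}(f)\le 2^{n-1}$ or ${\rm wt}(f)>2^{n-1}$; both branches of Proposition \ref{weight-function-weight-derivatives} collapse into the single identity $\bigl({\rm wt}(f)-2^{n-1}\bigr)^2=\tfrac14\bigl(2^{2n}-2S\bigr)$. Solving this for $S$ gives $S=2^{2n-1}-2\bigl({\rm wt}(f)-2^{n-1}\bigr)^2$.

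Finally I would expand the square, $\bigl({\rm wt}(f)-2^{n-1}\bigr)^2={\rm wt}(f)^2-2^n{\rm wt}(f)+2^{2n-2}$, and simplify; the constant terms $2^{2n-1}$ cancel and what is left is $S=2^{n+1}{\rm wt}(f)-2{\rm wt}(f)^2=2{\rm wt}(f)\bigl(2^n-{\rm wt}(f)\bigr)$, which is exactly the claimed formula. I do not expect any genuine obstacle here, since the computation is elementary; the only point deserving a word of care is the legitimacy of squaring, which is justified because both sides are the same nonnegative real number (the radicand $2^{2n}-2S$ is nonnegative precisely because Proposition \ref{weight-function-weight-derivatives} holds). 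As a consistency check, the right-hand side attains its maximum $2^{2n-1}$ at ${\rm wt}(f)=2^{n-1}$, i.e. for balanced $f$, which accords with the expectation that balanced functions carry the largest total first-order derivative weight.
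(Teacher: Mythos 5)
Your proof is correct and matches the paper's approach: the paper derives Corollary \ref{sum-weight-derivatives} precisely by inverting Proposition \ref{weight-function-weight-derivatives} (it offers no further written argument), and your squaring step, which collapses the two sign branches into the single identity $\bigl({\rm wt}(f)-2^{n-1}\bigr)^2=\tfrac14\bigl(2^{2n}-2S\bigr)$, is exactly the natural fleshing-out of that inversion. The algebra checks out, and your remark that squaring is legitimate because both sides equal the same nonnegative quantity correctly handles the only delicate point.
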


\begin{remark}\label{weight-square-fourier}
	Observe that we can also use the relation \ref{square-fourier} to write the total weight of the first-order derivatives of a Boolean function $f$ in terms of $\mathcal{F}^2(f)$ as follows:
			\[\sum_{a\in\mathbb{F}_2^n\setminus\{0\}}{\rm wt}(D_af)=2^{2n-1}-\frac{1}{2}\mathcal{F}^2(f).\]
\end{remark}

Since $\mathcal{F}(f)=0$ if and only if $f$ is balanced, then by Remark \ref{weight-square-fourier} and by the fact that $0 \leq \mathcal{F}^2(f)\leq 2^{2n}$ we deduce the following.

\begin{corollary}\label{weight-derivative-balanced-function}
	For any Boolean function $f$ on $n$ variables, we have \[\sum_{a\in\mathbb{F}_2^n\setminus\{0\}}{\rm wt}(D_af)\leq 2^{2n-1}.\] Moreover, equality holds if and only if $f$ is balanced.
\end{corollary}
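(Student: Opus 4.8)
The plan is to read the conclusion directly off the identity recorded in Remark \ref{weight-square-fourier}, namely
\[\sum_{a\in\mathbb{F}_2^n\setminus\{0\}}{\rm wt}(D_af)=2^{2n-1}-\frac{1}{2}\mathcal{F}^2(f).\]
Since $\mathcal{F}(f)=2^n-2{\rm wt}(f)$ is an integer, its square $\mathcal{F}^2(f)$ is nonnegative, so the subtracted term $\tfrac{1}{2}\mathcal{F}^2(f)$ is at least $0$. Feeding this into the displayed identity immediately produces the claimed bound $\sum_{a\neq 0}{\rm wt}(D_af)\leq 2^{2n-1}$. (The companion upper bound $\mathcal{F}^2(f)\leq 2^{2n}$ mentioned just before the statement is not needed here, as only nonnegativity of the square is used.)

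For the equality characterization I would argue in both directions. If equality holds in the bound, then $\tfrac{1}{2}\mathcal{F}^2(f)=0$, hence $\mathcal{F}^2(f)=0$ and therefore $\mathcal{F}(f)=0$; by the remark in the Preliminaries that $f$ is balanced if and only if $\mathcal{F}(f)=0$, this forces $f$ to be balanced. Conversely, if $f$ is balanced then $\mathcal{F}(f)=0$, so the identity collapses to $\sum_{a\neq 0}{\rm wt}(D_af)=2^{2n-1}$, which is exactly equality. This closes both implications.

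The whole argument is a single substitution into an identity already proved in Remark \ref{weight-square-fourier}, combined with the elementary fact that a real square is nonnegative, so I expect no genuine obstacle. The only point meriting a line of care is that $\mathcal{F}^2(f)=0$ is \emph{equivalent} to $\mathcal{F}(f)=0$ rather than merely implied by it, which is transparent since $\mathcal{F}(f)$ takes integer values.
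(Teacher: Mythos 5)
Your proposal is correct and follows exactly the paper's own route: the corollary is deduced from the identity in Remark \ref{weight-square-fourier} together with $\mathcal{F}^2(f)\geq 0$ and the equivalence $\mathcal{F}(f)=0$ if and only if $f$ is balanced. Your added observation that only nonnegativity of $\mathcal{F}^2(f)$ (not the upper bound $2^{2n}$) is needed is a small but accurate refinement of the paper's phrasing.
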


\begin{proposition}\label{sum-weight-derivatives-form}
	For any Boolean function $f$ on $n$ variables, we have \[\sum_{a\in \mathbb{F}_2^n\setminus \{0\}}{\rm wt}(D_af)=2^{2n-1}-2\ell^2\] where $0 \leq \ell \leq 2^{n-1}$. Moreover, $\ell=0$ if and only if $f$ is balanced and $\ell=2^{n-1}$ if and only if $f$ is constant.
\end{proposition}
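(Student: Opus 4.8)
The plan is to reduce everything to the identity already recorded in Remark \ref{weight-square-fourier}, namely $\sum_{a\in\mathbb{F}_2^n\setminus\{0\}}{\rm wt}(D_af)=2^{2n-1}-\frac{1}{2}\mathcal{F}^2(f)$. Comparing this with the target expression $2^{2n-1}-2\ell^2$, the entire statement amounts to showing that $\frac{1}{2}\mathcal{F}^2(f)$ can be written as $2\ell^2$ for a suitable nonnegative integer $\ell$ in the stated range, and then reading off the two extremal cases. So the substantive content is just a correct bookkeeping of a constant factor.

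First I would substitute $\mathcal{F}(f)=2^n-2{\rm wt}(f)$ and factor out a $2$: since $2^n-2{\rm wt}(f)=2\bigl(2^{n-1}-{\rm wt}(f)\bigr)$, we get $\mathcal{F}^2(f)=4\bigl(2^{n-1}-{\rm wt}(f)\bigr)^2$, hence $\frac{1}{2}\mathcal{F}^2(f)=2\bigl(2^{n-1}-{\rm wt}(f)\bigr)^2$. This makes the right choice transparent: set $\ell=\bigl|2^{n-1}-{\rm wt}(f)\bigr|$. Then $\frac{1}{2}\mathcal{F}^2(f)=2\ell^2$, and the displayed formula follows immediately from Remark \ref{weight-square-fourier}. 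Note also that $\ell$ is a nonnegative integer, because ${\rm wt}(f)$ is an integer.

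It then remains to pin down the range and the two equality conditions, which follow from the possible values of ${\rm wt}(f)$. Since $0\leq {\rm wt}(f)\leq 2^n$, the quantity $2^{n-1}-{\rm wt}(f)$ lies between $-2^{n-1}$ and $2^{n-1}$, so $0\leq \ell\leq 2^{n-1}$. Moreover $\ell=0$ exactly when ${\rm wt}(f)=2^{n-1}$, i.e.\ when $f$ is balanced, and $\ell=2^{n-1}$ exactly when ${\rm wt}(f)\in\{0,2^n\}$, i.e.\ when $f$ is the constant function $0$ or $1$. I do not anticipate a genuine obstacle: the result is essentially a repackaging of Remark \ref{weight-square-fourier} together with $\mathcal{F}(f)=2^n-2{\rm wt}(f)$. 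The only point requiring mild care is recognizing the substitution $\ell=\bigl|2^{n-1}-{\rm wt}(f)\bigr|$ so that the factor of $2$ emerges correctly and $\ell$ is integral; beyond that the argument is purely a matter of tracking the extreme values of ${\rm wt}(f)$.
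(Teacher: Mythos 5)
Your proof is correct, and it takes a slightly different (and in fact tighter) route than the paper's. The paper obtains the form $2^{2n-1}-2\ell^2$ abstractly: since ${\rm wt}(f)$ is an integer, the radicand in Proposition \ref{weight-function-weight-derivatives} must be a perfect square $m^2$, and the paper sets $\ell=m/2$; the bound $0\leq\ell\leq 2^{n-1}$ and the case $\ell=0$ are then read off from Corollary \ref{weight-derivative-balanced-function}, and the case $\ell=2^{n-1}$ from Corollary \ref{sum-weight-derivatives}. You instead construct $\ell$ explicitly as $\ell=\bigl|2^{n-1}-{\rm wt}(f)\bigr|$, via Remark \ref{weight-square-fourier} and the factorization $\mathcal{F}(f)=2\bigl(2^{n-1}-{\rm wt}(f)\bigr)$. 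This buys three things. First, integrality of $\ell$ is automatic, whereas the paper's $\ell=m/2$ tacitly requires $m$ to be even (true, since $m^2=2^{2n}-2\sum_{a\neq 0}{\rm wt}(D_af)$ is even, but never remarked); your version repairs that gloss, and also the paper's ``positive integer $m$'' should really be ``nonnegative,'' since $m=0$ when $f$ is balanced. Second, the range and both extremal characterizations fall out of $0\leq{\rm wt}(f)\leq 2^n$ in one line, with no appeal to the two corollaries. Third, your proof exhibits the actual value of $\ell$, which the paper leaves implicit. At bottom both arguments are reshufflings of the same identity (\ref{square-fourier}), so the difference is one of packaging rather than substance, but your packaging is the more informative one.
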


\begin{proof}
	 Since ${\rm wt}(f)$ is always a positive integer then the quantity in the square root appearing in the relation given in Proposition \ref{weight-function-weight-derivatives} must be \[2^{2n}-2\sum_{a\in \mathbb{F}_2^n\setminus \{0\}}{\rm wt}(D_af)=m^2\] for some positive integer $m$. This implies that \[\sum_{a\in \mathbb{F}_2^n\setminus \{0\}}{\rm wt}(D_af)=2^{2n-1}-2\ell^2\] where $\ell=m/2$. From Corollary \ref{weight-derivative-balanced-function}, it is clear that we must have $0\leq \ell\leq 2^{n-1}$. 
	 
	 Note that, from Corollary \ref{sum-weight-derivatives}, $\sum_{a\in\mathbb{F}_2^n\setminus\{0\}}{\rm wt}(D_af)=0$ if and only if ${\rm wt}(f)=0,2^n$ if and only if $f$ is constant. So it follows that $\ell=2^{n-1}$ if and only if $f$ is a constant. We also conclude, by Corollary \ref{weight-derivative-balanced-function}, that $\ell=0$ if and only if $f$ is balanced.
\end{proof}

\begin{corollary}\label{sum-weight-quadratic-derivatives}
	Let $f$ be any quadratic Boolean function on $n$ variables. Then \[\sum_{a\in\mathbb{F}_2^n\setminus\{0\}}{\rm wt}(D_af)=\begin{cases}
	2^{2n-1} & \text{ if $f$ is balanced}\\
	2^{2n-1}-2^{n+k-1} & \text{ otherwise},
	\end{cases}\] where $k=V(f)$. 
\end{corollary}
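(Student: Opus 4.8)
The plan is to reduce everything to Corollary \ref{sum-weight-derivatives}, which already expresses the total derivative weight $\sum_{a\in\mathbb{F}_2^n\setminus\{0\}}{\rm wt}(D_af)$ solely in terms of ${\rm wt}(f)$ via the product $2\,{\rm wt}(f)[2^n-{\rm wt}(f)]$. Since a quadratic function is completely controlled, as far as its weight is concerned, by Theorem \ref{quadratic-weight}, the whole argument becomes a substitution into this single formula, split according to whether $f$ is balanced or not.

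For the balanced case I would simply note ${\rm wt}(f)=2^{n-1}$ and insert this into Corollary \ref{sum-weight-derivatives}, obtaining $2\cdot 2^{n-1}\cdot(2^n-2^{n-1})=2\cdot 2^{2n-2}=2^{2n-1}$; alternatively, this is immediate from Corollary \ref{weight-derivative-balanced-function}, which already identifies $2^{2n-1}$ as the exact value attained precisely by balanced functions.

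For the unbalanced case I would apply Theorem \ref{quadratic-weight} to write ${\rm wt}(f)=2^{n-1}\pm 2^{\frac{n+k}{2}-1}$ with $k=V(f)$, so that $2^n-{\rm wt}(f)=2^{n-1}\mp 2^{\frac{n+k}{2}-1}$. The key observation is that the product ${\rm wt}(f)[2^n-{\rm wt}(f)]$ is then a difference of squares, namely $(2^{n-1})^2-(2^{\frac{n+k}{2}-1})^2=2^{2n-2}-2^{n+k-2}$, in which the ambiguous sign $\pm$ cancels. Multiplying by $2$ yields $2^{2n-1}-2^{n+k-1}$, which is exactly the claimed second branch.

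There is essentially no obstacle here: the content has been front-loaded into Corollary \ref{sum-weight-derivatives} and Theorem \ref{quadratic-weight}, and the only point to watch is that the two possible weights of an unbalanced quadratic function produce the same derivative total, which is precisely what the difference-of-squares cancellation guarantees. The mild point worth stating explicitly is that the answer depends on $f$ only through $k=V(f)$, consistent with the fact that both weight values in Theorem \ref{quadratic-weight} share the same $k$.
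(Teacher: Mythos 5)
Your proposal is correct and follows essentially the same route as the paper: the balanced case via Corollary \ref{weight-derivative-balanced-function} (or direct substitution into Corollary \ref{sum-weight-derivatives}), and the unbalanced case by inserting the weight formula of Theorem \ref{quadratic-weight} into Corollary \ref{sum-weight-derivatives}, where the paper likewise exploits the difference-of-squares product $\left(2^{n-1}-2^{\frac{n+k}{2}-1}\right)\left(2^{n-1}+2^{\frac{n+k}{2}-1}\right)$ so that both signs give the same total. Your explicit remark that the $\pm$ ambiguity cancels is exactly the paper's closing observation that the other choice of sign ``yields the same result.''
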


\begin{proof}
	If $f$ is balanced, then by Corollary \ref{weight-derivative-balanced-function} we have \(\sum_{a\in\mathbb{F}_2^n\setminus\{0\}}{\rm wt}(D_af)=2^{2n-1}\). Suppose that $f$ is unbalanced.   Then, by Theorem \ref{quadratic-weight}, we know that ${\rm wt}(f)=2^{n-1}\pm 2^{\frac{n+k}{2}-1}$, where $k=V(f)$. Taking ${\rm wt}(f)=2^{n-1}-2^{\frac{n+k}{2}-1}$ and applying Corollary \ref{sum-weight-derivatives} , we have 
	\begin{align*}
			\sum_{a\in\mathbb{F}_2^n\setminus\{0\}}{\rm wt}(D_af)&=2\left( 2^{n-1}-2^{\frac{n+k}{2}-1}\right)(2^n-2^{n-1}+2^{\frac{n+k}{2}-1})\\&=2\left( 2^{n-1}-2^{\frac{n+k}{2}-1}\right)\left(2^{n-1}+2^{\frac{n+k}{2}-1}\right)\\&= 2\left( 2^{2n-2}-2^{n+k-2}\right)=\left( 2^{2n-1}-2^{n+k-1}\right)\\&=2^{n-1}\left( 2^n-2^k\right)=2^{2n-1}-2^{n+k-1}. 
	\end{align*} Using ${\rm wt}(f)=2^{n-1}+2^{\frac{n+k}{2}-1}$ yields the same result.
\end{proof}

Since $D_af$ is balanced for all nonzero $a\in \mathbb{F}_2^n$ if and only if $f$ is bent, then the following result is immediate. 

\begin{corollary}\label{bent-weight-derivatives}
	Let $f$ be a bent Boolean function on $n$ variables, with $n$ even. Then \[\sum_{a\in\mathbb{F}_2^n\setminus\{0\}}{\rm wt}(D_af)=2^{n-1}(2^n-1)=2^{2n-1}-2^{n-1}.\]
\end{corollary}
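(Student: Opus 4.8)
The plan is to invoke Theorem \ref{bent-thm} directly, which states that $f$ is bent if and only if $D_af$ is balanced for every nonzero $a\in\mathbb{F}_2^n$. Since $f$ is assumed bent, each first-order derivative $D_af$ with $a\neq 0$ is therefore balanced, which by the definition of balancedness means ${\rm wt}(D_af)=2^{n-1}$.

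First I would record that the index set $\mathbb{F}_2^n\setminus\{0\}$ has exactly $2^n-1$ elements, and that every summand of $\sum_{a\in\mathbb{F}_2^n\setminus\{0\}}{\rm wt}(D_af)$ equals the constant $2^{n-1}$. The sum then collapses to a product:
\[\sum_{a\in\mathbb{F}_2^n\setminus\{0\}}{\rm wt}(D_af)=(2^n-1)\cdot 2^{n-1}=2^{2n-1}-2^{n-1},\]
which is precisely the asserted value.

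There is essentially no obstacle here: once the balancedness of each nonzero derivative is in hand, the result is immediate, and the only things to check are the count of nonzero points and the elementary identity $2^{n-1}(2^n-1)=2^{2n-1}-2^{n-1}$. In this sense the corollary is, as the surrounding text indicates, a direct consequence of Theorem \ref{bent-thm}.

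As an alternative route that bypasses Theorem \ref{bent-thm}, I could instead combine Corollary \ref{sum-weight-derivatives} with the well-known weight of a bent function, namely ${\rm wt}(f)=2^{n-1}\pm 2^{\frac{n}{2}-1}$. Substituting this into the identity $\sum_{a\in\mathbb{F}_2^n\setminus\{0\}}{\rm wt}(D_af)=2{\rm wt}(f)[2^n-{\rm wt}(f)]$ and expanding the resulting difference of squares gives $2\left(2^{2n-2}-2^{n-2}\right)=2^{2n-1}-2^{n-1}$, independently of the chosen sign; this confirms the same answer and serves as a consistency check on the first argument.
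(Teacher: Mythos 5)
Your primary argument is correct and is exactly the paper's proof: the corollary is stated as immediate from Theorem \ref{bent-thm}, since bentness makes every nonzero derivative balanced of weight $2^{n-1}$, and summing over the $2^n-1$ nonzero shifts gives $2^{n-1}(2^n-1)=2^{2n-1}-2^{n-1}$. Your alternative route via Corollary \ref{sum-weight-derivatives} and ${\rm wt}(f)=2^{n-1}\pm 2^{\frac{n}{2}-1}$ is also valid and a nice consistency check, but the main proof coincides with the paper's.
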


\begin{theorem}\label{weight-partially-bent}
	For any partially-bent Boolean function $f$ on $n$ variables, we have \[\sum_{a\in\mathbb{F}_2^n\setminus\{0\}}{\rm wt}(D_af)=\begin{cases}
	2^{2n-1} & \text{ if $f$ is balanced}\\
	2^{2n-1}-2^{2n-2h-1} & \text{ otherwise},
	\end{cases}\] where $h\leq n/2$.
\end{theorem}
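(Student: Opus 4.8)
The plan is to imitate the proof of Corollary \ref{sum-weight-quadratic-derivatives} essentially verbatim, splitting into the balanced and unbalanced cases. In the balanced case, the identity $\sum_{a\in\mathbb{F}_2^n\setminus\{0\}}{\rm wt}(D_af)=2^{2n-1}$ is immediate from Corollary \ref{weight-derivative-balanced-function}, since the equality there holds precisely when $f$ is balanced. This disposes of the first branch with no computation at all.

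For the unbalanced case the key input is Remark \ref{partially-bent-remark}, which supplies the exact weight ${\rm wt}(f)=2^{n-1}\pm 2^{n-1-h}$ with $\dim E=n-2h$ and $h\leq n/2$. I would substitute this value directly into Corollary \ref{sum-weight-derivatives}, namely $\sum_{a\in\mathbb{F}_2^n\setminus\{0\}}{\rm wt}(D_af)=2\,{\rm wt}(f)[2^n-{\rm wt}(f)]$. Taking the minus sign gives $2^n-{\rm wt}(f)=2^{n-1}+2^{n-1-h}$, so the product $2\,{\rm wt}(f)[2^n-{\rm wt}(f)]=2(2^{n-1}-2^{n-1-h})(2^{n-1}+2^{n-1-h})$ collapses via difference of squares to $2(2^{2n-2}-2^{2n-2-2h})=2^{2n-1}-2^{2n-2h-1}$, which is the claimed value. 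The only bookkeeping to check is that squaring $2^{n-1-h}$ gives $2^{2n-2-2h}$ and that $2^{2n-1-2h}$ coincides with the stated $2^{2n-2h-1}$.

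The sole point meriting an explicit remark is that the two sign choices in Remark \ref{partially-bent-remark} yield the same total weight; this is automatic, since flipping the sign merely exchanges the roles of ${\rm wt}(f)$ and $2^n-{\rm wt}(f)$, leaving their product (and hence the sum) unchanged — exactly the phenomenon noted at the end of the quadratic computation. Consequently there is no genuine obstacle: once the partially-bent weight formula is in hand, the theorem is a one-line algebraic consequence of Corollary \ref{sum-weight-derivatives}, and the proof structure is identical to that of Corollary \ref{sum-weight-quadratic-derivatives} with $2^{\frac{n+k}{2}-1}$ replaced by $2^{n-1-h}$.
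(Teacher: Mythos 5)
Your proposal is correct and matches the paper's own proof essentially step for step: the balanced branch via Corollary \ref{weight-derivative-balanced-function}, and the unbalanced branch by substituting the weight from Remark \ref{partially-bent-remark} into Corollary \ref{sum-weight-derivatives} and collapsing via difference of squares (the paper takes the $+$ sign where you take the $-$, an immaterial choice, and your observation that the two signs merely swap ${\rm wt}(f)$ with $2^n-{\rm wt}(f)$ is exactly the paper's closing remark). Incidentally, your computation is cleaner than the printed one, which contains sign typos in the intermediate lines ($2(2^{2n-2}+2^{2n-2-2h})$ should read $2(2^{2n-2}-2^{2n-2-2h})$) though its final value is correct.
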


\begin{proof}
  If $f$ is balanced, then by Corollary \ref{weight-derivative-balanced-function} we have \(\sum_{a\in\mathbb{F}_2^n\setminus\{0\}}{\rm wt}(D_af)=2^{2n-1}\). Now suppose that $f$ is unbalanced. From Remark \ref{partially-bent-remark}, we know that ${\rm wt}(f)=2^{n-1}\pm 2^{n-1-h}$ where $\dim E = n-2h$. Taking ${\rm wt}(f)=2^{n-1}+2^{n-1-h}$ and applying Corollary \ref{sum-weight-derivatives}, we have 
	
	\begin{align*}
		\sum_{a\in\mathbb{F}_2^n\setminus\{0\}}{\rm wt}(D_af)&= 2(2^{n-1}+2^{n-1-h})(2^n-2^{n-1}-2^{n-1-h})\\&=2(2^{n-1}+2^{n-1-h})(2^{n-1}-2^{n-1-h})\\&=2(2^{2n-2}+2^{2n-2-2h})=(2^{2n-1}+2^{2n-1-2h})\\&=2^{n-1}(2^n-2^{n-2h})=2^{2n-1}-2^{2n-2h-1}.
	\end{align*} Using ${\rm wt}(f)=2^{n-1}-2^{n-1-h}$ yields the same result.
\end{proof}

\begin{proposition}
	Let $f$ be a plateaued function of order $r$ on $n$ variables. Then we have \[\sum_{a\in\mathbb{F}_2^n\setminus\{0\}}{\rm wt}(D_af)=\begin{cases} 2^{2n-1} & \text{ if $f$ is balanced}\\
	2^{2n-1}-2^{2n-r-1} & \text{ otherwise}.	\end{cases}\]
\end{proposition}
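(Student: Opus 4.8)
The plan is to apply the formula from Remark~\ref{weight-square-fourier}, which expresses the total weight of the first-order derivatives directly in terms of $\mathcal{F}^2(f)$, namely
\[\sum_{a\in\mathbb{F}_2^n\setminus\{0\}}{\rm wt}(D_af)=2^{2n-1}-\frac{1}{2}\mathcal{F}^2(f).\]
Since $\mathcal{F}(f)=W_f(0)$, computing the left-hand side reduces entirely to determining the single Walsh value $W_f(0)$, and this is exactly the quantity that the plateaued hypothesis controls.

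First I would dispose of the balanced case: recall that $f$ is balanced if and only if $\mathcal{F}(f)=0$. Substituting $\mathcal{F}^2(f)=0$ into the displayed relation immediately yields $2^{2n-1}$, matching the first branch. For the unbalanced case, $\mathcal{F}(f)=W_f(0)\neq 0$, so $W_f(0)$ is a \emph{nonzero} value of the Walsh spectrum. By the definition of the order of a plateaued function (all nonzero values of $W_f^2(a)$ equal $2^{2n-r}$), we therefore have $W_f^2(0)=2^{2n-r}$, that is $\mathcal{F}^2(f)=2^{2n-r}$. Substituting this into the relation gives $2^{2n-1}-\tfrac{1}{2}\cdot 2^{2n-r}=2^{2n-1}-2^{2n-r-1}$, which is the second branch.

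The argument is essentially immediate once Remark~\ref{weight-square-fourier} is invoked, so there is no genuine obstacle to overcome. The only point requiring care is the observation that for an unbalanced plateaued function the value $W_f(0)$ is automatically a nonzero element of the spectrum, which is precisely what forces $W_f^2(0)=2^{2n-r}$ through the definition of order; this is the step that converts the qualitative plateaued hypothesis into the clean numerical value of $\mathcal{F}^2(f)$ that the weight formula consumes.
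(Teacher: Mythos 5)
Your proof is correct and follows exactly the paper's own argument: both dispose of the balanced case via $\mathcal{F}(f)=0$, use the definition of order to force $\mathcal{F}^2(f)=2^{2n-r}$ when $f$ is unbalanced, and substitute into the identity of Remark~\ref{weight-square-fourier}. Your write-up is in fact slightly more explicit than the paper's in noting that $W_f(0)$ being a nonzero spectral value is what triggers the order condition.
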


\begin{proof}
	If $f$ is balanced, we have $\mathcal{F}^2(f)=0$ and if $f$ is not balanced, then $\mathcal{F}^2(f)$ is nonzero positive integer and we deduce from the definition of a plateaued function of order $r$ that it must be $\mathcal{F}^2(f)=2^{2n-r}$. By applying Remark \ref{weight-square-fourier} the result follows.
\end{proof}

\begin{theorem}\label{weight-permutation}
	Let $F$ be a vectorial Boolean function from $\mathbb{F}_{2^n}$ to itself. Then  \[\sum_{\lambda,a\in\mathbb{F}_{2^n}^\ast}{\rm wt}(D_aF_\lambda) \leq 2^{2n-1}(2^n-1).\] Moreover, equality holds if and only if $F$ is a permutation.
\end{theorem}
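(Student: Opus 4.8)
The plan is to reduce the statement about the vectorial function $F$ to a component-wise application of Corollary \ref{weight-derivative-balanced-function}. For each fixed $\lambda\in\mathbb{F}_{2^n}^\ast$, the component $F_\lambda$ is an ordinary Boolean function on $n$ variables (via the identification $\mathbb{F}_{2^n}\cong\mathbb{F}_2^n$), so Corollary \ref{weight-derivative-balanced-function} applies verbatim to it and yields
\[\sum_{a\in\mathbb{F}_{2^n}^\ast}{\rm wt}(D_aF_\lambda)\leq 2^{2n-1},\]
with equality if and only if $F_\lambda$ is balanced. This single inequality is the entire engine of the proof.

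First I would simply sum this bound over all $2^n-1$ nonzero values of $\lambda$. Because the right-hand side $2^{2n-1}$ does not depend on $\lambda$, summing over the $2^n-1$ admissible $\lambda$ gives
\[\sum_{\lambda,a\in\mathbb{F}_{2^n}^\ast}{\rm wt}(D_aF_\lambda)\leq (2^n-1)\,2^{2n-1},\]
which is exactly the claimed bound $2^{2n-1}(2^n-1)$.

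For the equality case I would argue that the left-hand sum is a sum of $2^n-1$ terms, each independently bounded above by the same quantity $2^{2n-1}$. Hence the aggregate attains its maximum $(2^n-1)2^{2n-1}$ if and only if every summand attains its own maximum $2^{2n-1}$, i.e. if and only if $F_\lambda$ is balanced for every nonzero $\lambda$. Invoking the characterization recalled in the Preliminaries, that $F$ is a permutation precisely when all of its components are balanced, I conclude that equality holds if and only if $F$ is a permutation.

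I expect no serious obstacle here, since the substance is already packaged in Corollary \ref{weight-derivative-balanced-function}; the only point requiring a word of care is the equality analysis. One must observe that the $2^n-1$ terms are each bounded by the \emph{same} constant and can each reach that constant independently, so tightness of the sum forces simultaneous tightness of all terms — this is precisely the step that converts the numerical equality into the "all components balanced" condition and hence into the permutation property.
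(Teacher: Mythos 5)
Your proof is correct and follows essentially the same route as the paper's: apply Corollary \ref{weight-derivative-balanced-function} to each component $F_\lambda$, sum over the $2^n-1$ nonzero $\lambda$, and use the fact that $F$ is a permutation if and only if all its components are balanced. If anything, you are slightly more explicit than the paper on the equality analysis, spelling out that tightness of the sum forces each summand to attain its individual maximum $2^{2n-1}$.
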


\begin{proof}
	 By Corollary \ref{weight-derivative-balanced-function}, we can deduce that for any function $F$ from $\mathbb{F}_{2^n}$ to itself, we have \[\sum_{\lambda,a\in\mathbb{F}_{2^n}^\ast}{\rm wt}(D_aF_\lambda)\leq 2^{2n-1}(2^n-1).\] 
	Since all the components $F_\lambda$, $\lambda\in\mathbb{F}_{2^n}^\ast$, of a permutation $F$ are balanced, then it follows that the equality holds if and only if $F$ is a permutation.
\end{proof}

\begin{remark}\label{AB_APN-weight-derivatives}
	Observe that if $F$ is an AB function or APN permutation from $\mathbb{F}_{2^n}$ to itself, then we can use Theorem \ref{weight-permutation} to conclude that the total weight of the first-order derivatives of its components must be equal to \(2^{2n-1}(2^n-1)\). Computer check shows that Dillion's APN permutation in dimension $6$ is equal to $2^{11}(2^6-1)=129024$.
\end{remark}

\begin{remark}\label{weight-APN-1}
	Notice that, from Theorem \ref{weight-permutation}, we can deduce that for any non-bijective APN function $F$ from $\mathbb{F}_{2^n}$ to itself we have \[\sum_{\lambda,a\in\mathbb{F}_{2^n}^\ast}{\rm wt}(D_aF_\lambda) < 2^{2n-1}(2^n-1).\]
\end{remark}

\begin{proposition}\label{weight-first-derivative-APN}
	Let $Q$ be a quadratic APN function from $\mathbb{F}_{2^n}$ to itself, with bent and unbalanced semi-bent components only. Then we have 
	\[\sum_{\lambda,a\in\mathbb{F}_{2^n}^\ast}{\rm wt}(D_aF_\lambda)= 2^{n-1}(2^n-1)(2^n-2).\]
\end{proposition}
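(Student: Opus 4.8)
The plan is to reduce the claim to a counting problem for the components and then pin down the counts using the APN characterization in Theorem~\ref{APN-fourier-first-derivative}. Writing $F=Q$, I would first apply Remark~\ref{weight-square-fourier} to each component and sum over the $2^n-1$ nonzero $\lambda$, obtaining
\[\sum_{\lambda,a\in\mathbb{F}_{2^n}^\ast}{\rm wt}(D_aF_\lambda)=(2^n-1)2^{2n-1}-\tfrac12\sum_{\lambda\in\mathbb{F}_{2^n}^\ast}\mathcal{F}^2(F_\lambda).\]
By hypothesis every component is either bent or unbalanced semi-bent. A bent function satisfies $|W_f(0)|=2^{n/2}$, so $\mathcal{F}^2(F_\lambda)=2^n$, while an unbalanced semi-bent function in even dimension satisfies $|W_f(0)|=2^{n/2+1}$, so $\mathcal{F}^2(F_\lambda)=2^{n+2}$. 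Hence, letting $B$ and $S$ denote the numbers of bent and unbalanced semi-bent components, the problem reduces to determining $B$ and $S$, since $\sum_{\lambda\neq0}\mathcal{F}^2(F_\lambda)=2^n(B+4S)$ while trivially $B+S=2^n-1$.

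To get a second equation I would invoke Theorem~\ref{APN-fourier-first-derivative}, which for the APN function $F$ gives $\sum_{\lambda\neq0,a}\mathcal{F}^2(D_aF_\lambda)=2^{2n+1}(2^n-1)$. Each component of a quadratic function is plateaued, a bent component having order $n$ and an even-dimension semi-bent component having order $n-2$. Using the autocorrelation identity $\sum_a\mathcal{F}^2(D_af)=2^{-n}\sum_bW_f^4(b)$ together with $\sum_bW_f^4(b)=2^{4n-r}$ for a plateaued function of order $r$, I obtain $\sum_a\mathcal{F}^2(D_aF_\lambda)=2^{2n}$ for a bent component and $2^{2n+2}$ for a semi-bent component. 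Summing gives $\sum_{\lambda\neq0,a}\mathcal{F}^2(D_aF_\lambda)=2^{2n}(B+4S)$, so comparison with Theorem~\ref{APN-fourier-first-derivative} yields $B+4S=2(2^n-1)$. Solving the system $B+S=2^n-1$, $B+4S=2^{n+1}-2$ gives $S=(2^n-1)/3$ and $B=2(2^n-1)/3$, which is consistent with $n$ even (forced by the existence of bent components).

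Finally I would substitute back: $\sum_{\lambda\neq0}\mathcal{F}^2(F_\lambda)=2^n(B+4S)=2^{n+1}(2^n-1)=2^{2n+1}-2^{n+1}$, whence
\[\sum_{\lambda,a\in\mathbb{F}_{2^n}^\ast}{\rm wt}(D_aF_\lambda)=(2^n-1)2^{2n-1}-2^{2n}+2^n,\]
and a routine factorisation rewrites the right-hand side as $2^{n-1}(2^n-1)(2^n-2)$.

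The main obstacle is the second paragraph, namely establishing the split $B:S=2:1$. This is where the APN hypothesis is genuinely used, and it hinges on correctly evaluating the per-component quantity $\sum_a\mathcal{F}^2(D_aF_\lambda)$ for a plateaued function of a prescribed order, i.e. on the relation between the second moment of the autocorrelation and the fourth moment of the Walsh transform. That relation is standard but is not recorded earlier in the paper, so it would need to be justified (or the fourth-moment count derived directly from the definition of APN). Everything before and after this step is bookkeeping with the weight--Fourier dictionary already furnished by Remark~\ref{weight-square-fourier}.
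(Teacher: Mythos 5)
Your proof is correct, but it takes a genuinely different route from the paper's at the decisive step. The paper computes the per-component totals directly from derivative structure---for a bent component all $2^n-1$ nonzero first-order derivatives are balanced, giving $2^{n-1}(2^n-1)$ by Corollary \ref{bent-weight-derivatives}, while for a quadratic semi-bent component the derivatives are balanced off the linear space $V(F_\lambda)$, of size $4$, and vanish on it, giving $2^{n-1}(2^n-4)$---and then simply \emph{quotes} from \cite{Mus} the fact that such a quadratic APN function has exactly $\frac{2}{3}(2^n-1)$ bent and $\frac{1}{3}(2^n-1)$ semi-bent components. You instead run the bookkeeping through $\mathcal{F}^2(F_\lambda)$ via Remark \ref{weight-square-fourier} and, more importantly, re-derive the $2\!:\!1$ split from the APN equality in Theorem \ref{APN-fourier-first-derivative} together with the per-component values $\sum_a\mathcal{F}^2(D_aF_\lambda)=2^{2n}$ (bent) and $2^{2n+2}$ (semi-bent); solving $B+S=2^n-1$, $B+4S=2(2^n-1)$ recovers the counts, and your final arithmetic checks out. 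What your route buys is self-containedness: the external citation is replaced by the paper's own Theorem \ref{APN-fourier-first-derivative}. The moment identity you flag as unrecorded, $\sum_a\mathcal{F}^2(D_af)=2^{-n}\sum_bW_f^4(b)$, is in fact available from the paper's tools: apply relation (\ref{square-fourier}) to $D_af$ and compare with the computation (\ref{momentum-fourier-equation}) inside Lemma \ref{4th power moment - total weight}; alternatively, for quadratic $f$ you can avoid fourth moments altogether, since each $D_af$ is affine, so $\mathcal{F}^2(D_af)$ equals $2^{2n}$ for $a\in V(f)$ and $0$ otherwise, whence $\sum_a\mathcal{F}^2(D_aF_\lambda)=2^{2n}\,|V(F_\lambda)|$ with $|V(F_\lambda)|=1$ or $4$. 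One point to make explicit: your claim that an unbalanced semi-bent component has $|W_{F_\lambda}(0)|=2^{\frac{n}{2}+1}$ relies on the plateaued spectrum $\{0,\pm 2^{\frac{n}{2}+1}\}$, which the paper's nonlinearity-based definition of semi-bent does not guarantee by itself; it is quadraticity (plateaued-ness) that supplies it, so that hypothesis is used there too, just as it is in the paper's appeal to $\dim V(F_\lambda)=2$. A pleasant bonus of your argument: since it uses quadraticity only through plateaued-ness of the components, it proves the formula for any APN function whose components are all bent or unbalanced plateaued semi-bent, which explains the paper's subsequent remark that the non-quadratic Kasami APN functions in dimensions $6$ and $8$ exhibit the same total weight.
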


\begin{proof}
	 From Theorem \ref{bent-thm}, $D_af$ is balanced for all $a\in\mathbb{F}_{2^n}^\ast$ if and only if $f$ is bent. We know that for any unbalanced quadratic function $f$, $D_af$ is balanced for all $a\in\mathbb{F}^n\setminus V(f)$ and $D_\alpha f=0$ for all $\alpha\in V(f)$. For semi-bent we know that $\dim V(f)=2$, i.e., $|V(f)|=4$. Thus, for each bent, the total weight of its first-order derivatives must be $2^{n-1}(2^n-1)$ (see Corollary \ref{bent-weight-derivatives}) and for each semi-bent, the total weight of its first-order derivatives must be $2^{n-1}(2^n-4)$. For any quadratic APN function with bent and semi-bent components, there are exactly $2/3(2^n-1)$ bent components and $(2^n-1)/3$ semi-bent components (see \cite{Mus}). So the total weight of all first-order derivatives of $Q$ must be 
	 \begin{align*}
	 \sum_{\lambda,a\in\mathbb{F}_{2^n}^\ast}{\rm wt}(D_aQ_\lambda)&=\frac{2}{3}(2^n-1)2^{n-1}(2^n-1)+\frac{1}{3}(2^n-1)2^{n-1}(2^n-4)\\&= 2^{n-1}(2^n-1)\left( \frac{2}{3}(2^n-1)+\frac{1}{3}(2^n-4)\right)\\&= 2^{n-1}(2^n-1)\left(\frac{2(2^n-1)+(2^n-1)-3}{3}\right)\\&= 2^{n-1}(2^n-1)\left(\frac{3(2^n-1)-3}{3}\right)\\&=2^{n-1}(2^n-1)(2^n-2). \qedhere \end{align*}
\end{proof}

 \begin{remark}
 	Computer check shows that the total weight of first-order derivatives of components of the Gold APN power functions in dimensions $4$, $6$ and $8$ satisfy Proposition \ref{weight-first-derivative-APN}. However, the non-quadratic APN power functions such as Kasami functions of dimension $6$ and $8$ also have the total weight that satisfy Proposition \ref{weight-first-derivative-APN}.  
 \end{remark}

\begin{theorem}\label{weight-square-Permutation-derivatives}
	Let $F$ be a permutation from $\mathbb{F}_{2^n}$ to itself. Then \[\sum_{\lambda,a\in\mathbb{F}_{2^n}^\ast}\left[{\rm wt}(D_aF_\lambda)\right]^2\geq 2^{2n-1}(2^n-1)(2^{n-1}+1).\] Moreover, equality holds if and only if $F$ is an APN permutation.
\end{theorem}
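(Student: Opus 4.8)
The plan is to reduce this to the two Fourier-based results already available—Theorem \ref{APN-fourier-first-derivative} and the permutation equality in Theorem \ref{weight-permutation}—by translating everything into weights through the identity $\mathcal{F}(D_aF_\lambda)=2^n-2\,{\rm wt}(D_aF_\lambda)$. Squaring this relation gives
\[\mathcal{F}^2(D_aF_\lambda)=4\left[{\rm wt}(D_aF_\lambda)\right]^2-2^{n+2}\,{\rm wt}(D_aF_\lambda)+2^{2n},\]
which is the bridge between the quantity we want to bound and the quantity that Theorem \ref{APN-fourier-first-derivative} controls.

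First I would sum the displayed identity over all $\lambda\in\mathbb{F}_{2^n}^\ast$ and all $a\in\mathbb{F}_{2^n}^\ast$. The index set has $(2^n-1)^2$ pairs, so the constant term contributes $2^{2n}(2^n-1)^2$. The linear term is $2^{n+2}\sum_{\lambda,a\in\mathbb{F}_{2^n}^\ast}{\rm wt}(D_aF_\lambda)$, and here I would invoke Theorem \ref{weight-permutation}: since $F$ is assumed to be a permutation, that sum equals exactly $2^{2n-1}(2^n-1)$, so the linear contribution is a fixed number. Thus the whole expansion pins down $\sum_{\lambda,a\in\mathbb{F}_{2^n}^\ast}\mathcal{F}^2(D_aF_\lambda)$ in terms of $\sum_{\lambda,a\in\mathbb{F}_{2^n}^\ast}[{\rm wt}(D_aF_\lambda)]^2$ plus explicit constants.

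Next I would feed in Theorem \ref{APN-fourier-first-derivative}, which sums $\mathcal{F}^2(D_aF_\lambda)$ over $a$ ranging over all of $\mathbb{F}_{2^n}$ (including $a=0$); I must first strip off the $a=0$ terms. For $a=0$ one has $D_0F_\lambda\equiv 0$, hence $\mathcal{F}(D_0F_\lambda)=2^n$ and $\mathcal{F}^2(D_0F_\lambda)=2^{2n}$, so the $a=0$ block contributes $2^{2n}(2^n-1)$. Subtracting this from the bound $2^{2n+1}(2^n-1)$ yields
\[\sum_{\lambda,a\in\mathbb{F}_{2^n}^\ast}\mathcal{F}^2(D_aF_\lambda)\geq 2^{2n}(2^n-1),\]
with equality precisely when $F$ is APN. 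Combining this lower bound with the expansion from the first step and solving the resulting inequality for the sum of squared weights gives, after routine simplification, the claimed bound $2^{2n-1}(2^n-1)(2^{n-1}+1)$.

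Finally, the equality characterization comes out for free: every ingredient except the Fourier inequality is an exact identity (the permutation weight sum is an equality because $F$ is a permutation, and the algebraic expansion is exact), so equality in the target statement holds if and only if equality holds in Theorem \ref{APN-fourier-first-derivative}, i.e. if and only if $F$ is APN; together with the standing hypothesis that $F$ is a permutation, this means $F$ is an APN permutation. I do not expect a genuine obstacle here—the argument is essentially a change of variables from Fourier values to weights—so the only delicate point is the careful bookkeeping of the $a=0$ terms and of the number of summands, which must be handled exactly for the constants to collapse to the stated form.
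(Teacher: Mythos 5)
Your proposal is correct and follows essentially the same route as the paper: expand $\mathcal{F}^2(D_aF_\lambda)=\left[2^n-2\,{\rm wt}(D_aF_\lambda)\right]^2$, use the permutation equality of Theorem \ref{weight-permutation} for the linear term, and invoke Theorem \ref{APN-fourier-first-derivative} for the bound and the equality characterization. The only (immaterial) difference is bookkeeping: you strip off the $a=0$ terms from the Fourier inequality explicitly, whereas the paper keeps $a=0$ in the sum and absorbs it into the constant term, using ${\rm wt}(D_0F_\lambda)=0$.
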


\begin{proof}
	We have the following:
	\begin{align}\label{fourier-square-derivatives}
		\sum_{\lambda\neq 0,a\in \mathbb{F}_{2^n}}\mathcal{F}^2(D_aF_\lambda)&=\sum_{\lambda\neq 0,a\in \mathbb{F}_{2^n}}[2^n-2{\rm wt}(D_aF_\lambda)]^2\nonumber\\&=\sum_{\lambda\neq 0,a\in \mathbb{F}_{2^n}}\left[2^{2n}-4\cdot 2^n {\rm wt}(D_aF_\lambda)+4[{\rm wt}(D_aF_\lambda)]^2 \right]\nonumber\\&=2^{3n}(2^n-1)-4\cdot 2^n \sum_{\lambda,a\in \mathbb{F}_{2^n}^\ast}{\rm wt}(D_aF_\lambda)+4\sum_{\lambda,a\in \mathbb{F}_{2^n}^\ast}\left[{\rm wt}(D_aF_\lambda)\right]^2. 
	\end{align} 
	Since $F$ is a permutation then, by Theorem \ref{weight-permutation}, we have \[\sum_{\lambda, a\in \mathbb{F}_{2^n}^\ast}{\rm wt}(D_aF_\lambda)=2^{2n-1}(2^n-1).\]
	So by Equation (\ref{fourier-square-derivatives}) and Theorem \ref{APN-fourier-first-derivative}, we have
	
	\begin{align}\label{weight-square-eqn}
	2^{3n}(2^n-1)-4\cdot 2^{3n-1}(2^n-1)+4\sum_{\lambda,a\in \mathbb{F}_{2^n}^\ast}\left[{\rm wt}(D_aF_\lambda)\right]^2 \geq 2^{2n+1}(2^n-1)
	\end{align}
	from which we deduce that \begin{align}\label{weight-square-eqn-2} \sum_{\lambda,a\in \mathbb{F}_{2^n}^\ast}\left[{\rm wt}(D_aF_\lambda)\right]^2 \geq 2^{2n-1}(2^n-1)(2^{n-1}+1). \end{align} Since, by Theorem \ref{APN-fourier-first-derivative}, we conclude that the equality in the relation (\ref{weight-square-eqn}) holds if and only if $F$ is an APN permutation, then we deduce from the same that the equality in the relation (\ref{weight-square-eqn-2}) holds if and only if $F$ is an APN permutation.\qedhere
\end{proof}

\begin{remark}
	Computer check shows that the result in Theorem \ref{weight-square-Permutation-derivatives} holds for Dillion's APN permutation in dimension $6$. However, the result can never hold for non-bijective APN functions. 
\end{remark}

\section{On weight of second-order derivatives of APN functions}\label{total-weight-second-derivatives}
In this section, we use the total weight of second-order derivatives of (vectorial) Boolean functions to characterize bent and APN functions. 

We first exhibit the relationship between the weight of the first-order derivatives and the weight of the second-order derivatives. By Corollary \ref{sum-weight-derivatives}, we can deduce that for any $a\in \mathbb{F}_2^n$ we have $\sum_{b\in\mathbb{F}_2^n\setminus\{0\}}{\rm wt}(D_bD_af)=2{\rm wt}(D_af)[2^n-{\rm wt}(D_af)]=2^{n+1}{\rm wt}(D_af)-2[{\rm wt}(D_af)]^2$ from which we get 
\begin{align}\label{weight-first-second-order-derivatives}
	\sum_{a,b\in\mathbb{F}_2^n\setminus\{0\}}{\rm wt}(D_bD_af)=2^{n+1}\sum_{a\in\mathbb{F}_2^n\setminus\{0\}}{\rm wt}(D_af)-2\sum_{a\in\mathbb{F}_2^n\setminus\{0\}}[{\rm wt}(D_af)]^2.
\end{align}
From Corollary \ref{weight-derivative-balanced-function}, we know that \(\sum_{a\in\mathbb{F}_2^n\setminus\{0\}}{\rm wt}(D_af)=2^{2n-1}\) if and only if $f$ is balanced. So it implies that the equation (\ref{weight-first-second-order-derivatives}) becomes 
\begin{align}\label{weight-first-second-order-derivatives-balnced}
\sum_{a,b\in\mathbb{F}_2^n\setminus\{0\}}{\rm wt}(D_bD_af)=2^{3n}-2\sum_{a\in\mathbb{F}_2^n\setminus\{0\}}[{\rm wt}(D_af)]^2
\end{align} if and only if $f$ balanced.

By Remark \ref{weight-square-fourier}, for any $a\in\mathbb{F}_2^n$, observe that \(\sum_{b\in\mathbb{F}_2^n\setminus\{0\}}{\rm wt}(D_bD_af)=2^{2n-1}-\frac{1}{2}\mathcal{F}^2(D_af)\) from which we obtain 
\begin{align}\label{weight-second-derivatives-fourier}
	\sum_{a,b\in\mathbb{F}_2^n\setminus\{0\}}{\rm wt}(D_bD_af)=2^{2n-1}(2^n-1)-\frac{1}{2}\sum_{a\in\mathbb{F}_2^n\setminus\{0\}}\mathcal{F}^2(D_af).
\end{align}

Next we give an upper bound on the total weight of the second-order derivatives and show that this bound is met only for bent functions.

\begin{proposition}\label{weight-bent-second-derivatives}
	Let $f$ be a Boolean function on $n$ variables. Then we have \[\sum_{a,b\in\mathbb{F}_{2^n}^\ast}{\rm wt}(D_bD_af)\leq 2^{2n-1}(2^n-1).\] Moreover, equality holds if and only if $f$ is bent .
\end{proposition}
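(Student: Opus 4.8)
The plan is to read off the result directly from the identity~(\ref{weight-second-derivatives-fourier}) established just above, namely
\[
\sum_{a,b\in\mathbb{F}_2^n\setminus\{0\}}{\rm wt}(D_bD_af)=2^{2n-1}(2^n-1)-\frac{1}{2}\sum_{a\in\mathbb{F}_2^n\setminus\{0\}}\mathcal{F}^2(D_af).
\]
The point is that the correction term is a half-sum of squares, hence non-negative: each $\mathcal{F}^2(D_af)\geq 0$ since $\mathcal{F}(D_af)$ is an integer. Subtracting a non-negative quantity from $2^{2n-1}(2^n-1)$ immediately yields the asserted bound $\sum_{a,b}{\rm wt}(D_bD_af)\leq 2^{2n-1}(2^n-1)$. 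So the inequality itself requires no new work beyond the already-derived relation.

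For the equality characterization, I would argue that equality holds precisely when the subtracted term vanishes, i.e.\ when $\sum_{a\in\mathbb{F}_2^n\setminus\{0\}}\mathcal{F}^2(D_af)=0$. Since this is a sum of non-negative terms, it vanishes if and only if \emph{every} summand is zero, that is $\mathcal{F}(D_af)=0$ for all nonzero $a$. Invoking the recorded fact that $\mathcal{F}(g)=0$ if and only if $g$ is balanced, this is equivalent to $D_af$ being balanced for all nonzero $a\in\mathbb{F}_2^n$. By Theorem~\ref{bent-thm}, the latter condition characterizes bent functions, so equality holds if and only if $f$ is bent, completing the proof.

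There is essentially no genuine obstacle here, as the analytic content was packed into deriving~(\ref{weight-second-derivatives-fourier}); the single step worth stating explicitly is the elementary observation that a sum of non-negative reals is zero exactly when each term is zero, which is what converts "the total correction vanishes" into "$D_af$ is balanced for \emph{every} nonzero $a$" rather than merely on average. Everything else is an application of results already available in the excerpt.
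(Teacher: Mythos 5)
Your proof is correct and follows essentially the same route as the paper: both read the inequality directly off the identity~(\ref{weight-second-derivatives-fourier}) using the non-negativity of $\sum_{a\neq 0}\mathcal{F}^2(D_af)$, and both characterize equality via the vanishing of that sum, which forces every $D_af$ to be balanced and hence $f$ to be bent by Theorem~\ref{bent-thm}. Your write-up merely makes explicit the term-by-term vanishing step that the paper leaves implicit.
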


\begin{proof}
	Observe that \[\sum_{a\in\mathbb{F}_2^n\setminus\{0\}}\mathcal{F}^2(D_af)=0\] if and only if $D_af$ is balanced for all $a\in\mathbb{F}_2^n\setminus\{ 0\}$ if and only if $f$ bent. Hence, the result follows immediately from the relation (\ref{weight-second-derivatives-fourier}).
\end{proof}

\begin{proposition}\label{weight-partially-bent-second-derivates}
	For any partially-bent function $f$ on $n$ variables, we have \[\sum_{a,b\in\mathbb{F}_2^n\setminus\{0\}}{\rm wt}(D_bD_af)=2^{2n-1}(2^n-2^{n-2h}),\] where $h\leq n/2$.
\end{proposition}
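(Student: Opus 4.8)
The plan is to reduce everything to evaluating a single sum. Identity (\ref{weight-second-derivatives-fourier}) already expresses the quantity we want as
\[\sum_{a,b\in\mathbb{F}_2^n\setminus\{0\}}{\rm wt}(D_bD_af)=2^{2n-1}(2^n-1)-\frac{1}{2}\sum_{a\in\mathbb{F}_2^n\setminus\{0\}}\mathcal{F}^2(D_af),\]
so once I compute $\sum_{a\neq 0}\mathcal{F}^2(D_af)$ for a partially-bent $f$, the rest is a short algebraic simplification.

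To evaluate that sum I would use the characterizing property of partially-bent functions: for every $a\in\mathbb{F}_2^n$, the derivative $D_af$ is either balanced or constant, and the set of $a$ for which $D_af$ is constant is precisely the linear space $V(f)$. This is exactly the dichotomy underlying Definition \ref{partially-bent-defn}, where $f$ is affine on $E$ and bent on a complement $E'$: writing $f$ as a sum of a bent part in the $E'$-coordinates and an affine part in the $E$-coordinates, $D_af$ is a balanced-plus-constant function when $a$ has nonzero $E'$-component and constant when it does not. Hence $V(f)$ coincides with $E$ up to the decomposition, and Remark \ref{partially-bent-remark} gives $\dim E=n-2h$, so $|V(f)|=2^{n-2h}$.

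Now the sum is immediate. For $a\notin V(f)$ the derivative $D_af$ is balanced, so $\mathcal{F}(D_af)=0$ and the term vanishes; for $a\in V(f)\setminus\{0\}$ the derivative is constant, so $\mathcal{F}(D_af)=\pm 2^n$ and $\mathcal{F}^2(D_af)=2^{2n}$. Since there are $|V(f)|-1=2^{n-2h}-1$ nonzero linear structures,
\[\sum_{a\in\mathbb{F}_2^n\setminus\{0\}}\mathcal{F}^2(D_af)=\left(2^{n-2h}-1\right)2^{2n}.\]
Substituting into (\ref{weight-second-derivatives-fourier}) and pulling out the common factor $2^{2n-1}$ yields $2^{2n-1}\left[(2^n-1)-(2^{n-2h}-1)\right]=2^{2n-1}(2^n-2^{n-2h})$, the desired formula.

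The main obstacle is not the arithmetic but making the structural step rigorous: one must justify both the balanced-or-constant dichotomy and the count $|V(f)|=2^{n-2h}$, i.e.\ that $\dim V(f)=\dim E=n-2h$. I would verify the latter by cross-checking against the quadratic case, where $\dim V(f)=n-2h$ is forced by equating the weight formula of Theorem \ref{quadratic-weight} with that of Remark \ref{partially-bent-remark}; this guarantees the dimension bookkeeping is consistent before the final substitution.
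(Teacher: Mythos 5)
Your proof is correct and is essentially the paper's own argument: both rest on the same structural fact that every derivative of a partially-bent function is balanced (for $a\notin E$) or constant (for $a\in E$) with $\dim E=n-2h$, and your substitution of $\mathcal{F}^2(D_af)\in\{0,2^{2n}\}$ into (\ref{weight-second-derivatives-fourier}) is just the Fourier-side repackaging of the paper's direct split of the double sum via Corollary \ref{weight-derivative-balanced-function} (inner sum $2^{2n-1}$ when $D_af$ is balanced, $0$ when $D_\alpha f$ is constant), yielding the identical count $2^{2n-1}(2^n-2^{n-2h})$. Note that the paper also simply asserts the balanced-or-constant dichotomy and takes $\dim E=n-2h$ from Remark \ref{partially-bent-remark}, so your sketched justification is no less rigorous than the original.
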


\begin{proof}
	Observe that, for all $\alpha\in E$, we have $D_\alpha f=c$, with $c\in\mathbb{F}_2$ and $D_af$ is balanced for all $a\in \mathbb{F}^n\setminus E$. Since, by Remark \ref{partially-bent-remark}, we are given that $\dim E =n-2h$, then we have $|\mathbb{F}_2^n\setminus E|=2^n-2^{n-2h}$. Since $D_af$ is balanced then, by Corollary \ref{weight-derivative-balanced-function}, we can deduce that \(\sum_{b\in\mathbb{F}^n\setminus\{0\}}{\rm wt}(D_bD_af)=2^{2n-1}\) and since $D_\alpha f$ is constant then clearly we have \(\sum_{b\in\mathbb{F}^n\setminus\{0\}}{\rm wt}(D_bD_\alpha f)=0\). Therefore, it follows that \begin{align*}\sum_{a,b\in\mathbb{F}^n\setminus\{0\}}{\rm wt}(D_bD_af)&=\sum_{a\in\mathbb{F}^n\setminus E}\sum_{b\in\mathbb{F}^n\setminus\{0\}}{\rm wt}(D_bD_af)+ \sum_{\alpha \in E}\sum_{b\in\mathbb{F}^n\setminus\{0\}}{\rm wt}(D_bD_\alpha f)\\&=2^{2n-1}(2^n-2^{n-2h})+2^{n-2h}(0)=2^{2n-1}(2^n-2^{n-2h}).\qedhere \end{align*}
\end{proof}

It appears like the total weight of second-order derivatives can be associated with the nonlinearity of a Boolean function. Observe from  Propositions \ref{weight-bent-second-derivatives} and \ref{weight-partially-bent-second-derivates} that all bent functions have the same total weight of their second-order derivatives and also all partially-bent functions with the same nonlinearity have the same total weight of their second-order derivatives. 

Next we show that the total weight of the second-order derivatives can be expressed in terms of the Walsh transforms of a Boolean function $f$. 

\begin{lemma}\label{4th power moment - total weight}
	Let $f$ be a Boolean function on $n$ variables. Then we have \[\sum_{a,b\in\mathbb{F}_2^n\setminus\{0\}}{\rm wt}(D_bD_af)=2^{3n-1}-\frac{1}{2^{n+1}}\sum_{a\in\mathbb{F}_2^n} W_f^4(a). \]
\end{lemma}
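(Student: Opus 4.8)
The plan is to reduce everything to the classical fourth-moment identity for the Walsh transform. The starting point is relation (\ref{weight-second-derivatives-fourier}), already derived above, which expresses the target quantity as
\[\sum_{a,b\in\mathbb{F}_2^n\setminus\{0\}}{\rm wt}(D_bD_af)=2^{2n-1}(2^n-1)-\frac{1}{2}\sum_{a\in\mathbb{F}_2^n\setminus\{0\}}\mathcal{F}^2(D_af).\]
So it suffices to rewrite the sum $\sum_{a\neq 0}\mathcal{F}^2(D_a f)$ in terms of $\sum_a W_f^4(a)$, and the whole lemma collapses to proving the identity
\[\sum_{a\in\mathbb{F}_2^n} W_f^4(a)=2^n\sum_{b\in\mathbb{F}_2^n}\mathcal{F}^2(D_b f).\]

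To establish this, I would first record a Wiener--Khinchin-type relation. Expanding $W_f^2(a)=\sum_{x,y}(-1)^{f(x)+f(y)+a\cdot(x+y)}$ and summing against $(-1)^{a\cdot b}$ over all $a$, the inner character sum $\sum_a(-1)^{a\cdot(x+y+b)}$ collapses to $2^n$ exactly when $y=x+b$ and vanishes otherwise, giving
\[\sum_{a\in\mathbb{F}_2^n}W_f^2(a)(-1)^{a\cdot b}=2^n\,\mathcal{F}(D_b f).\]
This says that $2^n\mathcal{F}(D_\bullet f)$ is the Walsh transform of the function $a\mapsto W_f^2(a)$, so inverting gives $W_f^2(a)=\sum_b\mathcal{F}(D_b f)(-1)^{a\cdot b}$.

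Next I would compute the fourth moment by writing $\sum_a W_f^4(a)=\sum_a W_f^2(a)\,W_f^2(a)$, substituting the inverted expression for one of the two factors, and interchanging the finite sums over $a$ and $b$. The inner sum over $a$ is once more $\sum_a W_f^2(a)(-1)^{a\cdot b}=2^n\mathcal{F}(D_b f)$, which produces the fourth-moment identity above. Isolating the term $b=0$, where $D_0 f\equiv 0$ and hence $\mathcal{F}^2(D_0 f)=2^{2n}$, then yields
\[\sum_{a\in\mathbb{F}_2^n\setminus\{0\}}\mathcal{F}^2(D_a f)=\frac{1}{2^n}\sum_{a\in\mathbb{F}_2^n}W_f^4(a)-2^{2n}.\]

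Finally I would substitute this into (\ref{weight-second-derivatives-fourier}) and simplify: the constant $2^{2n-1}$ arising from $\tfrac12\cdot 2^{2n}$ cancels the $-2^{2n-1}$ hidden inside $2^{2n-1}(2^n-1)=2^{3n-1}-2^{2n-1}$, leaving exactly $2^{3n-1}-\tfrac{1}{2^{n+1}}\sum_a W_f^4(a)$, as claimed. The main obstacle is the fourth-moment identity itself; everything after it is bookkeeping. The delicate points there are justifying the inversion step (treating $W_f^2$ and $2^n\mathcal{F}(D_\bullet f)$ as a Walsh transform pair) and tracking the powers of $2^n$ correctly, but since all sums are finite over $\mathbb{F}_2^n$ this amounts only to interchanging two finite summations, so no analytic subtlety arises.
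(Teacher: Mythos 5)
Your proof is correct; every step checks, including the final bookkeeping (the $\tfrac12\cdot 2^{2n}$ term does cancel the $-2^{2n-1}$ inside $2^{2n-1}(2^n-1)$ exactly as you claim). However, your route is organized differently from the paper's. The paper proves the lemma in one self-contained computation: it expands $\sum_{a}W_f^4(a)$ as a quadruple sum over $x,y,z,w$, collapses the character sum $\sum_a(-1)^{a\cdot(x+y+z+w)}$ onto the slice $w=x+y+z$, substitutes $y=x+b$, $z=x+c$ to recognize the summand as $(-1)^{D_bD_cf(x)}$, and obtains directly
\[\sum_{a\in\mathbb{F}_2^n}W_f^4(a)=2^n\sum_{b,c\in\mathbb{F}_2^n}\mathcal{F}(D_bD_cf)=2^{4n}-2^{n+1}\sum_{b,c\in\mathbb{F}_2^n}{\rm wt}(D_bD_cf),\]
from which the lemma follows since ${\rm wt}(D_bD_cf)=0$ whenever $b$ or $c$ is zero. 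You instead factor the argument through the first-order autocorrelation: the Wiener--Khinchin pair $\sum_a W_f^2(a)(-1)^{a\cdot b}=2^n\mathcal{F}(D_bf)$ together with Fourier inversion gives the classical sum-of-squares identity $\sum_a W_f^4(a)=2^n\sum_b\mathcal{F}^2(D_bf)$, after which you reuse the paper's relation (\ref{weight-second-derivatives-fourier}) to pass from $\sum_{a\ne 0}\mathcal{F}^2(D_af)$ to the second-order weights. The two computations are cousins --- unwinding (\ref{weight-second-derivatives-fourier}) back to (\ref{square-fourier}) applied to $g=D_af$ shows your route amounts to invoking the identity $\mathcal{F}^2(g)=\sum_b\mathcal{F}(D_bg)$ twice where the paper performs a single fourfold expansion --- but yours buys modularity: it isolates a reusable classical fact (the fourth-moment/sum-of-squares indicator identity, essentially from \cite{Car2}) and makes explicit that the lemma is exactly that identity combined with (\ref{weight-second-derivatives-fourier}), at the cost of the extra inversion step, which, as you correctly note, is harmless because all sums are finite. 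The paper's direct expansion is shorter and needs neither the inversion nor any earlier relations.
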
 

\begin{proof}
	If $f$ be a Boolean function on $n$ variables, we have
	\begingroup
	\allowdisplaybreaks
	\begin{align}\label{momentum-fourier-equation}
	\sum_{a\in \mathbb{F}_2^n}W_f^4 (a)\nonumber&=\sum_{a\in\mathbb{F}_2^n}\sum_{x,y,z,w\in \mathbb{F}_2^n}(-1)^{f(x)+f(y)+f(z)+f(w)+a\cdot (x+y+z+w)}\nonumber\\&=\sum_{a\in\mathbb{F}_2^n}\sum_{x,y,z,w\in \mathbb{F}_2^n}(-1)^{f(x)+f(y)+f(z)+f(w)}(-1)^{a\cdot(x+y+z+w)}\nonumber\\&=\sum_{x,y,z,w\in\mathbb{F}_2^n}(-1)^{f(x)+f(y)+f(z)+f(w)}\sum_{a\in \mathbb{F}_2^n}(-1)^{a\cdot (x+y+z+w)}\nonumber\\&=\sum_{x,y,z,w\in\mathbb{F}_2^n |x+y+z+w=0}2^n(-1)^{f(x)+f(y)+f(z)+f(w)}\nonumber\\&=2^n\sum_{x,y,z,w\in\mathbb{F}_2^n |w=x+y+z}(-1)^{f(x)+f(y)+f(z)+f(w)}\nonumber\\&=2^n\sum_{x,y,z\in\mathbb{F}_2^n}(-1)^{f(x)+f(y)+f(z)+f(x+y+z)}\nonumber\\&=2^n\sum_{x,b,c\in \mathbb{F}_2^n}(-1)^{f(x)+f(x+b)+f(x+c)+f(x+b+c)} ~~~~ (\text{ where $y=x+b$ and $z=x+c$}) \nonumber\\&=2^n\sum_{x,b,c\in \mathbb{F}_2^n}(-1)^{D_bD_cf(x)}=2^n\sum_{b,c\in \mathbb{F}_2^n}\mathcal{F}(D_bD_cf)\nonumber\\&=2^n\sum_{b,c\in \mathbb{F}_2^n}\left(2^n-2{\rm wt}(D_bD_cf)\right) \nonumber\\&=2^{4n}-2^{n+1}\sum_{b,c\in \mathbb{F}_2^n}{\rm wt}(D_bD_cf).	
	\end{align}
	\endgroup
	
	The result immediately follows from the relation (\ref{momentum-fourier-equation}).
\end{proof}

\begin{remark}\label{Weight-second-order-derivatives-bent}
	Since $\mathcal{W}_f(a)=\pm 2^{\frac{n}{2}}$, for all $a\in\mathbb{F}^n$, if and only if $f$ is bent, then Proposition \ref{weight-bent-second-derivatives} can also be deduced from Lemma \ref{4th power moment - total weight}.  Since nonlinearity depends on $\max_{a\in \mathbb{F}_2^n}|\mathcal{W}_f(a)|$ (see the relation (\ref{nonlinearity})), then we can use Lemma \ref{4th power moment - total weight} to conclude that a function with high nonlinearity its total weight is relatively high. 
\end{remark}

Next we give a characterization of APN functions that is deduced from Theorem \ref{APN-fourier-first-derivative}.

\begin{theorem}\label{weight-Second-Derivative-APN}
	Let $F$ be a function from $\mathbb{F}_{2^n}$ to itself. Then \[\sum_{\lambda,b,a\in\mathbb{F}_{2^n}^\ast}{\rm wt}(D_bD_aF_\lambda)\leq 2^{2n-1}(2^n-1)(2^n-2).\] Moreover, equality holds if and only if $F$ is APN.
\end{theorem}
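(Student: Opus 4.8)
The plan is to reduce the claim to Theorem \ref{APN-fourier-first-derivative} by way of the identity (\ref{weight-second-derivatives-fourier}), which already expresses the total weight of the second-order derivatives of a single Boolean function in terms of the squared Fourier coefficients of its first-order derivatives. Applying that identity to each component $F_\lambda$ (a Boolean function on $n$ variables) and summing over all $\lambda \in \mathbb{F}_{2^n}^\ast$ gives
\[\sum_{\lambda,a,b\in\mathbb{F}_{2^n}^\ast}{\rm wt}(D_bD_aF_\lambda)=2^{2n-1}(2^n-1)^2-\frac{1}{2}\sum_{\lambda,a\in\mathbb{F}_{2^n}^\ast}\mathcal{F}^2(D_aF_\lambda),\]
since there are exactly $2^n-1$ components and each contributes the constant $2^{2n-1}(2^n-1)$. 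Thus everything hinges on bounding the double sum $\sum_{\lambda,a\in\mathbb{F}_{2^n}^\ast}\mathcal{F}^2(D_aF_\lambda)$ from below.

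Next I would connect this restricted double sum to the one appearing in Theorem \ref{APN-fourier-first-derivative}, whose range also includes $a=0$. The only discrepancy is the $a=0$ contribution: since $D_0F_\lambda$ is the zero function, $\mathcal{F}(D_0F_\lambda)=2^n$ and hence $\mathcal{F}^2(D_0F_\lambda)=2^{2n}$ for each of the $2^n-1$ choices of $\lambda$, so the missing terms sum to exactly $2^{2n}(2^n-1)$. Subtracting them from the bound in Theorem \ref{APN-fourier-first-derivative} yields
\[\sum_{\lambda,a\in\mathbb{F}_{2^n}^\ast}\mathcal{F}^2(D_aF_\lambda)\geq 2^{2n+1}(2^n-1)-2^{2n}(2^n-1)=2^{2n}(2^n-1),\]
and, because the $a=0$ terms are a fixed constant independent of $F$, the equality condition transfers verbatim: equality holds here precisely when it holds in Theorem \ref{APN-fourier-first-derivative}, that is, precisely when $F$ is APN.

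Finally I would substitute this lower bound into the displayed identity above. The minus sign flips the inequality, producing
\[\sum_{\lambda,a,b\in\mathbb{F}_{2^n}^\ast}{\rm wt}(D_bD_aF_\lambda)\leq 2^{2n-1}(2^n-1)^2-\tfrac{1}{2}\cdot 2^{2n}(2^n-1),\]
and a one-line factorization of the right-hand side as $2^{2n-1}(2^n-1)(2^n-2)$ gives the asserted bound, with the equality case following immediately from the equality case established in the previous step. I do not anticipate a genuine obstacle here: the argument is essentially a bookkeeping reduction. The only point demanding care is the correct treatment of the $a=0$ diagonal term, which is exactly what converts the inequality of Theorem \ref{APN-fourier-first-derivative} into the sharp factor $2^n-2$ (rather than $2^n-1$) in the final bound, and which must be tracked faithfully so that the APN equality characterization is preserved.
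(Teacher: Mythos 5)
Your proposal is correct and follows essentially the same route as the paper's proof: both rest on the identity (\ref{weight-second-derivatives-fourier}) applied to each component $F_\lambda$, the constant $a=0$ contribution of $2^{2n}$ per component, and the bound of Theorem \ref{APN-fourier-first-derivative}, with the equality case transferring because the correction term is independent of $F$. The only difference is bookkeeping order—you subtract the $a=0$ terms from the Fourier bound, while the paper adds them to the weight-side identity before summing over $\lambda$—and the resulting algebra is identical.
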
 

\begin{proof}
By the relation (\ref{weight-second-derivatives-fourier}), for any  $\lambda\in\mathbb{F}_{2^n}^\ast$, we obtain
	\[\sum_{a\in\mathbb{F}_{2^n}^\ast}\mathcal{F}^2(D_aF_\lambda)=2^{2n}(2^n-1)-2\sum_{a,b\in\mathbb{F}_{2^n}^\ast}{\rm wt}(D_bD_aF_\lambda)\] from which we deduce the following:
	\begin{align}\label{equation1}
	\sum_{a\in\mathbb{F}_{2^n}}\mathcal{F}^2(D_aF_\lambda)&=\sum_{a\in\mathbb{F}_{2^n}^\ast}\mathcal{F}^2(D_aF_\lambda)+2^{2n}=2^{2n}(2^n-1)-2\sum_{a,b\in\mathbb{F}_{2^n}^\ast}{\rm wt}(D_bD_aF_\lambda)+2^{2n}\nonumber\\&= 2^{3n}-2\sum_{a,b\in\mathbb{F}_{2^n}^\ast}{\rm wt}(D_bD_aF_\lambda).
	\end{align}
Using the relation (\ref{equation1}), we have 
	
	\begin{align}\label{equation2}
	\sum_{\lambda\neq 0,a\in\mathbb{F}_{2^n}}\mathcal{F}^2(D_aF_\lambda)&= 2^{3n}(2^n-1)-2\sum_{\lambda,a,b\in\mathbb{F}_{2^n}^\ast}{\rm wt}(D_bD_aF_\lambda).
	\end{align}
By applying Theorem \ref{APN-fourier-first-derivative}, the relation (\ref{equation2}) becomes 

\begin{align}\label{equation3}
2^{3n}(2^n-1)-2\sum_{\lambda,a,b\in\mathbb{F}_{2^n}^\ast}{\rm wt}(D_bD_aF_\lambda)\geq 2^{2n+1}(2^n-1),
\end{align} and equality holds if and only if $F$ is APN. The relation (\ref{equation3}) is then reduced to: \[\sum_{\lambda,a,b\in\mathbb{F}_{2^n}^\ast}\text{wt}(D_bD_aF_\lambda)\leq 2^{2n-1}(2^n-1)(2^n-2)\] and equality holds if and only if $F$ is APN.
\end{proof}

\begin{theorem}
	Let $F$ be a function from $\mathbb{F}_{2^n}$ into $\mathbb{F}_{2^n}$. Then, for any nonzero $a\in \mathbb{F}_{2^n}$, we have \[\sum_{\lambda,b\in\mathbb{F}_{2^n}^\ast}{\rm wt}(D_bD_aF_\lambda)\leq 2^{2n-1}(2^n-2).\] Moreover, $F$ is APN if and only if equality holds for all nonzero $a$ in $\mathbb{F}_{2^n}$.
\end{theorem}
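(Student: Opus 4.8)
The plan is to mirror the structure of the proof of Theorem \ref{weight-Second-Derivative-APN}, but to replace its summed-over-$a$ input (Theorem \ref{APN-fourier-first-derivative}) with the fixed-$a$ counterpart, Theorem \ref{APN-fourier-first_derivative 2}. The central observation is that Remark \ref{weight-square-fourier}, applied to the single Boolean function $g=D_aF_\lambda$ for a \emph{fixed} nonzero $a$, already produces an identity whose left-hand side is a sum of second-order derivative weights and whose right-hand side involves only one Fourier-square term.

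First I would fix a nonzero $a\in\mathbb{F}_{2^n}$ and apply Remark \ref{weight-square-fourier} to $g=D_aF_\lambda$, using $b$ as the dummy derivative direction, to get $\sum_{b\in\mathbb{F}_{2^n}^\ast}{\rm wt}(D_bD_aF_\lambda)=2^{2n-1}-\tfrac{1}{2}\mathcal{F}^2(D_aF_\lambda)$. Summing this over all $\lambda\in\mathbb{F}_{2^n}^\ast$ (there are $2^n-1$ such $\lambda$) then yields
\[\sum_{\lambda,b\in\mathbb{F}_{2^n}^\ast}{\rm wt}(D_bD_aF_\lambda)=2^{2n-1}(2^n-1)-\frac{1}{2}\sum_{\lambda\in\mathbb{F}_{2^n}^\ast}\mathcal{F}^2(D_aF_\lambda).\]

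Next I would reconcile the range of the $\lambda$-sum with the hypothesis of Theorem \ref{APN-fourier-first_derivative 2}, whose sum runs over all $\lambda\in\mathbb{F}_{2^n}$, including $\lambda=0$. Since $F_0=Tr(0\cdot F)=0$, we have $D_aF_0=0$ and hence $\mathcal{F}^2(D_aF_0)=(2^n)^2=2^{2n}$, so that $\sum_{\lambda\in\mathbb{F}_{2^n}^\ast}\mathcal{F}^2(D_aF_\lambda)=\sum_{\lambda\in\mathbb{F}_{2^n}}\mathcal{F}^2(D_aF_\lambda)-2^{2n}$. Theorem \ref{APN-fourier-first_derivative 2} gives $\sum_{\lambda\in\mathbb{F}_{2^n}}\mathcal{F}^2(D_aF_\lambda)\geq 2^{2n+1}$, whence $\sum_{\lambda\in\mathbb{F}_{2^n}^\ast}\mathcal{F}^2(D_aF_\lambda)\geq 2^{2n}$. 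Substituting this lower bound into the displayed identity turns the subtracted term into an upper bound and collapses to $2^{2n-1}(2^n-1)-2^{2n-1}=2^{2n-1}(2^n-2)$, which is the asserted bound, valid for each fixed nonzero $a$.

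For the equality characterization I would trace the inequality back: equality in the final bound holds precisely when $\sum_{\lambda\in\mathbb{F}_{2^n}^\ast}\mathcal{F}^2(D_aF_\lambda)=2^{2n}$, equivalently when $\sum_{\lambda\in\mathbb{F}_{2^n}}\mathcal{F}^2(D_aF_\lambda)=2^{2n+1}$, and by Theorem \ref{APN-fourier-first_derivative 2} this latter equality holds for \emph{every} nonzero $a$ if and only if $F$ is APN. The argument is essentially routine once the correct input theorem is identified; the only point that requires care, and which I would treat as the main (minor) obstacle, is the bookkeeping of the $\lambda=0$ term, since the weight identity is naturally indexed over $\lambda\neq 0$ while the APN criterion is indexed over all $\lambda$. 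Getting the constant offset $2^{2n}$ right is exactly what sharpens the bound from the naive $2^{2n-1}(2^n-1)$ down to the precise value $2^{2n-1}(2^n-2)$.
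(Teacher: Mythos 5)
Your proof is correct and follows essentially the same route as the paper: both convert the sum $\sum_{\lambda\in\mathbb{F}_{2^n}}\mathcal{F}^2(D_aF_\lambda)$ into the total weight of second-order derivatives via the identity $\mathcal{F}^2(f)=\sum_{b}\mathcal{F}(D_bf)$ and then invoke Theorem \ref{APN-fourier-first_derivative 2} for the bound and the equality characterization. The only cosmetic difference is that you track the $\lambda=0$ contribution $2^{2n}$ explicitly via Remark \ref{weight-square-fourier}, whereas the paper sums over all $\lambda,b\in\mathbb{F}_{2^n}$ at once, noting that the weight terms with $\lambda=0$ or $b=0$ vanish.
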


\begin{proof}
	By applying the relation (\ref{square-fourier}), we have the following:

	\begin{align}\label{weight-sd-2}
		\sum_{\lambda\in\mathbb{F}_{2^n}}\mathcal{F}^2(D_aF_\lambda)&=\sum_{\lambda,b\in\mathbb{F}_{2^n}}\mathcal{F}(D_bD_aF_\lambda)\nonumber\\&=\sum_{\lambda,b\in\mathbb{F}_{2^n}}[2^n-2{\rm wt}(D_bD_aF_\lambda)]\nonumber\\& = 2^{3n}-2\sum_{\lambda,b\in\mathbb{F}_{2^n}}{\rm wt}(D_bD_aF_\lambda)\nonumber\\&=2^{3n}-2\sum_{\lambda,b\in\mathbb{F}_{2^n}^\ast}{\rm wt}(D_bD_aF_\lambda).
	\end{align} By Theorem \ref{APN-fourier-first_derivative 2} and the equation (\ref{weight-sd-2}) we obtain the following:
	\[\sum_{\lambda,b\in\mathbb{F}_{2^n}^\ast}{\rm wt}(D_bD_aF_\lambda)\leq 2^{2n-1}(2^n-2). \qedhere\]
\end{proof}

\begin{remark}
	As observed in Remark \ref{Weight-second-order-derivatives-bent}, we can conclude that the nonlinearities of components of APN functions are relatively high since in Theorem \ref{weight-Second-Derivative-APN} the total weight of its second-order derivatives is the highest.
\end{remark}

\section{Conclusion}

In this paper, we have established a characterization of APN permutations in terms of the  weight of their first-order derivatives. Using the total weight of second-order derivatives, a characterization for APN functions has been discovered. The total weight of the first-order and second-order derivatives for the functions such as bent, partially-bent, plateaued, balanced and permutations have been determined.

\end{document}